\pgfplotsset{compat = newest}
\newcommand{\m}[1]{\mathsf{#1}}
\newcommand{\mc}[1]{\mathcal{#1}}
\renewcommand{\vec}[1]{\overline{#1}}
\renewcommand{\phi}{\varphi}
\renewcommand{\leq}{\leqslant}
\renewcommand{\geq}{\geqslant}
\renewcommand{\AA}{\mc A} 
\newcommand{\BB}{\mc B} 
\newcommand{\CC}{\mc C} 
\newcommand{\NN}{\mc N} 
\newcommand{\inn}{\,{\in}\,} 
\newcommand{\eqn}{\,{=}\,} 
\newcommand{\leqn}{\,{\leq}\,} 
\newcommand{\geqn}{\,{\geq}\,} 
\newcommand{\constraint}{c}
\newcommand{\Dom}{\mathcal D} 
\newcommand{\binit}{b_{\mathit{I}}} 
\newcommand{\alphainit}{\alpha_{\mathit{I}}} 
\newcommand{\goto}[1]{\mathrel{\raisebox{-2pt}{$\xrightarrow{#1}$}}}
\newcommand{\CG}[1][\BB]{\textup{CG}_{#1}} 
\newcommand{\CGof}[1]{\textup{CG}_\BB(#1)} 
\newcommand{\alpharen}{\widehat \alpha}
\newcommand{\update}{\mathit{update}} 
\newcommand{\guard}{\mathit{guard}} 
\newcommand{\vwrite}{\mathit{write}} 
\newcommand{\vread}{\mathit{read}} 
\newcommand{\qe}{\mathit{qe}} 
\newcommand{\trans}[1]{\Delta_{#1}} 
\newcommand{\foo}{\tikz{\draw (0,0) -- (.3,0); \node[fill, minimum size=3pt, circle, anchor=center, inner sep=0pt] at (.15,0) {};}}
\newcommand{\domino}[4]{\text{\tikz[baseline=-0.5ex]{\node[scale=.6, inner sep=0pt]{$%
\left[\begin{array}{@{\,}l@{=}l@{\,}}{#1}&{#2}\\{#3}&{#4}\\\end{array}\right]$}}}}
\newcommand{\bool}{\mathtt{bool}}
\newcommand{\integer}{\mathtt{int}}
\newcommand{\rational}{\mathtt{rat}}
\newcommand{\tool}{\texttt{ada}\xspace}
\tikzstyle{place}=[draw, circle, inner sep=1.5pt, line width=.7pt, scale=.6, minimum width=5mm]
\tikzstyle{trans}=[draw, rectangle, inner sep=1.5pt, line width=.7pt, scale=.6, minimum width=4mm, minimum height=4mm]
\tikzstyle{state}=[draw, circle, inner sep=1.5pt, line width=.7pt, scale=.6]
\tikzstyle{edge}=[draw, ->, line width=.5pt]
\tikzstyle{action}=[scale=.6]
\tikzstyle{caption}=[scale=.9]
\newcommand{\cgnode}[2]{{#1}\nodepart{two}{#2}}
\newcommand{\lemref}[1]{Lem.~\ref{lem:#1}}
\newcommand{\defref}[1]{Def.~\ref{def:#1}}
\newcommand{\secref}[1]{Sec.~\ref{sec:#1}}
\newcommand{\tabref}[1]{Tab.~\ref{tab:#1}}
\newcommand{\exaref}[1]{Ex.~\ref{exa:#1}}
\newcommand{\figref}[1]{Fig.~\ref{fig:#1}}
\renewcommand{\algref}[1]{Alg.~\ref{alg:#1}}
\renewcommand{\eqref}[1]{(\ref{eq:#1})}
\begin{document}
\title{Soundness of Data-Aware Processes with Arithmetic Conditions
}
%
%
\author{Paolo Felli \and Marco Montali \and Sarah Winkler%
\thanks{This work is partially supported by the UNIBZ projects DaCoMan, QUEST, SMART-APP, VERBA, and WineId.}}
%
\authorrunning{Felli, Montali, Winkler}
%
\institute{Free University of Bozen-Bolzano
\email{\{pfelli,montali,winkler\}@inf.unibz.it}}
\maketitle              
\begin{abstract}

Data-aware processes represent and integrate structural and behavioural constraints in a single model, and are thus increasingly investigated in business process management and information systems engineering. In this spectrum, Data Petri nets (DPNs) have gained increasing popularity thanks to their ability to balance simplicity with expressiveness.
The interplay of data and control-flow makes checking the correctness of such models, specifically the well-known property of soundness, crucial and challenging.  
A major shortcoming of previous approaches for checking soundness of DPNs is that they consider data conditions without arithmetic, an essential feature when dealing with real-world, concrete applications. 
In this paper, we attack this open problem by providing a foundational and operational framework for assessing soundness of DPNs enriched with arithmetic data conditions. 
The framework comes with a proof-of-concept implementation that, instead of relying on ad-hoc techniques, employs off-the-shelf established SMT technologies. 
The implementation is validated on a collection of examples from the literature, and on synthetic variants constructed from such examples.

\keywords{Soundness \and Data Petri nets \and arithmetic conditions \and SMT.}
\end{abstract}
\section{Introduction}
Integrating structural and behavioral aspects to holistically capture how information systems dynamically operate over data through actions and processes is a central problem in business process management (BPM) \cite{Reichert12} and information systems engineering \cite{Snoe14}. This is witnessed by the mutual cross-fertilization of the two areas on this topic, with models and approaches originating from BPM and its underlying formal foundations 
being 
then applied to information and enterprise systems \cite{PWOB19,FetR20,RRMR21}, and vice-versa \cite{ACMA19,SnSW21}. 

The interplay of data and control-flow makes checking the correctness of such models crucial and challenging. From the formal point of view, the problem is undecidable even for severely restricted models and correctness properties, both in the case of simple data variables \cite{LFM18} and richer relational structures \cite{CGM13,DHLV18}.  
From the modeling perspective, the difficulty in combining these two dimensions is exacerbated by the fact that, more and more, models are obtained through a two-step approach: a first, automated discovery step produces a baseline model from event data, followed by a refinement and modification step driven by human ingenuity. 
 The following example illustrates the challenge.

\begin{figure}[t] 
\resizebox{\columnwidth}{!}{
\centering
\begin{tikzpicture}[node distance=49mm,>=stealth']
\node[state] (p1) {$\m p_1$}; 
\node[state, right of=p1, xshift=-13mm] (p2) {$\m p_2$}; 
\node[state, right of=p2, xshift=-3mm] (p3) {$\m p_3$}; 
\node[state, right of=p3, xshift=-7mm] (p4) {$\m p_4$}; 
\node[state, below of=p2, yshift=9mm, double] (end) {$\m {end}$}; 
\node[state, right of=p4, xshift=-3mm] (p5) {$\m p_5$}; 
\node[state, below of=p5,yshift=9mm] (p6) {$\m p_6$}; 
\node[state, left of=p6] (p7) {$\m p_7$}; 
\node[state, left of=p7] (p8) {$\m p_8$}; 
\draw[edge] (p1) -- 
  node[above, action] {\textsf{create fine}}
  node[below, action] {$a^w, t^w, d^w, p^w \geqn 0$}
  (p2);
\draw[edge] (p2) to[loop above, looseness=9]
  node[above, yshift=2.5mm,action] {\textsf{payment}}
  node[above, action] {$t^w \geqn 0$}
  (p2);
\draw[edge] (p2) -- 
  node[above, action] {\textsf{send fine}}
  node[below, action] {$0 \leqn \mathit{ds}^w \leqn 2160 \wedge e^w \geqn 0$}
  (p3);
\draw[edge] (p2) -- 
  node[left, action, near end] {\textsf{$\tau_1$}}
  node[left, action, near end, yshift=-4mm] {$d\,{\neq}\,0 \vee (p \eqn 0 \wedge t^r \geqn a^r)$}
  (end);
\draw[edge] (p3) to[loop above, looseness=9]
  node[above, yshift=2.5mm,action] {\textsf{payment}}
  node[above, action] {$t^w \geqn 0$}
  (p3);
\draw[edge] (p3) -- 
  node[above, action] {\textsf{insert notification}}
  (p4);
\draw[edge, rounded corners] (p3) -- ($(p3) + (-.7,-.7)$) --
  node[left, action, above] {\textsf{$\tau_2$}}
  node[left, action, below] {$t^r \geqn a^r\,{+}\,e^r$}
  ($(p3) + (-2.5,-.7)$) --(end);
\draw[edge] (p4) to[loop above, looseness=11, out=170, in=130]
  node[above, yshift=2.5mm,action] {\textsf{payment}}
  node[above, action] {$t^w \geqn 0$}
  (p4);
\draw[edge] (p4) to[loop right, looseness=11, out=110, in=70]
  node[right, yshift=2.5mm,action] {\textsf{add penalty}}
  node[right, xshift=.5mm, action] {$a^w \geqn 0$}
  (p4);
\draw[edge, rounded corners] (p4) -- ($(p4) + (.2,-.3)$) --
  node[above, action] {\textsf{appeal to judge}}
  node[below, action] {$0 \leqn \mathit{dj}^w \leqn 1440 \wedge d^w \geqn 0$}
  ($(p5) + (-.2,-.3)$) -- (p5);
\draw[edge, rounded corners] (p4) -- ($(p4) + (-.5,-1.2)$) --
  node[above, action] {\textsf{credit collection}}
  node[below, action] {$t^r\,{<}\,a^r\,{+}\,e^r$}
  ($(p4) + (-2.6,-1.2)$) -- (end);
\draw[edge, rounded corners] (p4) -- ($(p4) + (-.8,-.5)$) --
  node[above, action] {\textsf{$\tau_3$}}
  node[below, action] {$t^r \geqn a^r\,{+}\,e^r$}
  ($(p4) + (-2.6,-.5)$) --(end);
\draw[edge, rounded corners] (p5) -- ($(p5) + (-.2,.3)$) --
  node[above, action] {\textsf{$\tau_5$}}
  node[below, action] {$d^r \eqn 0$}
  ($(p4) + (.2,.3)$) -- (p4);
\draw[edge, rounded corners] (p4) -- ($(p4) + (.4,-1.3)$) --
  node[above, action] {\textsf{appeal to prefecture}}
  node[below, action] {$0 \leqn \mathit{dp}^w \leqn 1440$}
  ($(p4) + (2.7,-1.3)$) -- (p6);
\draw[edge] (p6) to
  node[left, action, above] {\textsf{send to prefecture}}
  node[left, action, below] {$d^w \geqn 0$}
  (p7);
\draw[edge] (p7) to
  node[left, action, above] {\textsf{result prefecture}}
  node[left, action, below] {$d^r \eqn 0$}
  (p8);
\draw[edge, rounded corners] (p7) -- ($(p7) + (-.2,-.3)$) --
  node[left, action, above, near end] {\textsf{$\tau_6$}}
  node[left, action, below, near end] {$d^r \eqn 1$}
  ($(end) + (.4,-.3)$) -- (end);
\draw[edge, rounded corners] (p5) -- ($(p5) + (.3,0)$) -- ($(p5) + (.3,-3.1)$) --
  node[left, action, above] {\textsf{$\tau_4$}}
  node[left, action, below] {$d^r \eqn 2$}
  ($(p5) + (-7.7,-3.1)$)  -- (end);
\draw[edge, rounded corners] (p8) -- ($(p8) + (.5,.5)$) --
  node[left, action, above] {\textsf{notify}}
  ($(p8) + (2.5, .5)$) -- (p4);
\end{tikzpicture}
}
\caption{Data-aware process for road fines \cite{MannhardtLRA16}.}
\label{fig:fine}
\end{figure}

\begin{example}
\label{exa:road fines}
A management process for road fines from an information system of the Italian police was presented as in~\cite{MannhardtLRA16} using a Data Petri nets (DPN). DPNs have gained increasing popularity thanks to their ability to balance simplicity with expressiveness. They focus on the evolution of a single (case) object evolved by the process (or a fixed number of inter-related objects), combining a Petri net-based control-flow with case variables and data conditions, capturing decisions and constrained updates. The process maintains seven case data variables:
$a$ (amount), $t$ (total amount), $d$ (a dismissal code), $p$ (points deducted),
$e$ (expenses), and three time intervals $\mathit{ds}$, $\mathit{dp}$, $\mathit{dj}$. 
The process starts by creating a fine for a traffic offense in the system (\textsf{create fine}).
A notification is sent to the offender within 90 days, i.e., 2160h, by action \textsf{send fine}) and this is entered in the system (\textsf{insert notification}).
If the offender pays an amount $t$ that exceeds the fine $a$ plus expenses $e$, the process terminates via $\tau_1$, $\tau_2$, or $\tau_3$.
For the less happy paths, there is a \textsf{credit collection} action if the paid sum was not enough; and the offender may file a protest, via
\textsf{appeal to judge}, \textsf{appeal to prefecture}, and subsequent actions. The appeals again need to respect a certain time frame.

For simplicity, in Figure~\ref{fig:fine} we present the model  as a transition system instead of a Petri net.
It was generated from real-life logs through multi-perspective process mining techniques, then enriched manually with more sophisticated arithmetic constraints extracted from domain knowledge \cite{MannhardtLRA16}. \emph{What is not obvious is that the process gets stuck in state
$\m p_7$ if \textsf{send to prefecture} writes value $d\,{>}\,1$.}
\end{example}

Examples like this call for a virtuous circle where process mining, human modelling, and \emph{automated} verification techniques for correctness checking empower each other. 
A well-established formal notion of correctness for dynamic systems is that of \emph{soundness} \cite{Aalst1998workflow}, defined over the well-known Petri net class of workflow nets. 
Intuitively, this property requires
\begin{inparaenum}[(i)]
\item that there are no activities in the process that cannot be executed in any of the possible executions; 
\item that from every reachable configuration the process can always be concluded by reaching a \emph{final} configuration and 
\item that final configurations are always reached in a `clean way', without leaving any thread of the process still hanging. 
\end{inparaenum}
After the seminal work in \cite{Aalst1998workflow}, which solely focuses on the evolution of single process instances in pure control-flow terms, several follow-up approaches were brought forward to define and study soundness 
for 
richer control-flow structures \cite{AHHS11}, several isolated cases \cite{HeSV04}, and presence of resources \cite{SidS13}, showing decidability of the problem without entering into the engineering of verification tools.

When considering data-aware processes, the standard formulation of soundness is insufficient, as it does not consider how data affects the execution. This makes prior works not readily applicable to solve the problem. Refined notions of soundness have in fact been put forward to take data into account. 
 Specifically, in~\cite{LFM18} the property of \emph{data-aware} soundness was obtained by lifting the standard soundness property of workflow nets to DPNs \cite{Mannhardt18,LFM18} (see the example above), by resorting to a translation to colored Petri nets. 
However, data conditions attached to activities were restricted to variable-to-constant comparisons. The approach was later extended to DPNs with a guard language that supports direct comparison of case data \cite{deLeoniFM21}.
In parallel, \cite{BatoulisHW17} introduced notions of \emph{decision-aware} soundness, where the focus is on data consumed and produced by (DMN) decision tables attached to the process. It was later shown in \cite{deLeoniFM21Jods} how DPNs could be used to capture BPMN processes enriched with DMN S-FEEL decision tables, and how the different decision-aware soundness notions \cite{BatoulisHW17} could be recast as data-aware soundness \cite{LFM18}. 

While data-aware soundness is a crucial notion that captures also the problem in \exaref{road fines},
a common shortcoming present in the literature is the limited expressivity of data conditions attached to activities and decision rules: \emph{they cannot handle expressions with arithmetic computations}.  
For instance, one can check that the current credit card balance $b$ is equal or larger than the price $p$ of the purchased item (i.e., that $b\geq p$), but not that it is greater than the price plus some threshold amount $t$ that could be obtained through a human task (i.e., that $b\geq p + t$). 
Clearly, this makes the existing technique not applicable to a very large number of real world applications (for instance,  \exaref{road fines}), revealing a research gap in the field that motivates the need of novel results in this spectrum.

\noindent
\textbf{Contributions and methodology.} 
Having identified this open research problem, we aim at contributing to the advancement of the body of knowledge in information systems engineering by answering three research questions:
\begin{compactenum}
\item Is soundness checking decidable for DPNs equipped with arithmetic?
\item Is there an operational way to conduct the check?
\item Is this operational way effective from the computational point of view?
\end{compactenum}
We answer these through theoretical and algorithmic research, and through the creation of a concrete IT proof-of-concept artifact for soundness checking.

Specifically, we focus on DPNs supporting unlimited addition of variables but only constant multiplication, that is, \emph{linear arithmetic}, which captures many real-world use cases. We address the first two research questions at once by  
lifting the approach in \cite{deLeoniFM21} to our richer setting, introducing a soundness checking procedure consisting of three algorithmic steps: 
\begin{inparaenum}[(1)]
\item we transform the DPN into a labelled transition system called \emph{data-aware dynamic system} (DDS) \cite{LFM20};
\item we 
construct
a \emph{constraint graph}, which acts as a symbolic representation of the reachable state space via a finite set of formulas;
\item a set of satisfiability checks is performed using the formulas in the graph, and we prove that the DPN is unsound if and only if one of these checks succeeds.
\end{inparaenum}
The constraint graph built for a DDS with arithmetic may in general be infinite. However, it is finite and computable, so that our check becomes a decision procedure, when the given process guarantees that reachable configurations
are suitably limited (e.g. in that only a bounded part of the computation
history is relevant, or the constraint language is sufficiently restricted).
This requirement holds for well-identified classes of processes,
formally captured by a \emph{finite history set} \cite{ada}. For instance, it applies to all DPNs used in our evaluation, including \exaref{road fines}.

Towards answering the third research question, we provide a proof-of-concept implementation of our framework in the tool \tool. Being research in this setting at an early stage, we cannot rely on well-established empirical or experimental methods to validate this IT artifact. To mitigate this problem, we proceed as follows. First and foremost, instead of relying on ad-hoc techniques, our tool employs off-the-shelf SMT solvers as a backend. This guarantees that the main computation burden, namely the satisfiability checks in the third algorithmic step, is handled by third-party, industrially-validated software. Secondly, since there is no benchmark for DPNs, we set up a preliminary, performance evaluation in two steps:
\begin{inparaenum}[\it (i)]
\item we collect, and check soundness of, all DPN examples/case studies present in the literature to model real-world data-aware processes in information systems of various types;
\item we construct synthetic variants of some of these examples, in order to test how the performance of \tool changes by increasing actions, variables and conditions present in the model.  
\end{inparaenum}

The paper is structured as follows. In \secref{background}, we fix our DPN model and define data-aware soundness, illustrating its high-level verification procedure in \secref{overview}. The following sections 
detail the required steps:
in \secref{dds} we relate data-aware soundness of a DPN to that of a corresponding transition system. We explain the constraint graph in \secref{cg}, and show in \secref{soundness} how it can be used to check data-aware soundness. Our implementation and experiments are the topic of \secref{implementation}. 
In \secref{conclusion} we conclude and comment on future work.

\section{Background}
\label{sec:background}

In this section we summarize some background on constraints, DPNs and data-aware dynamic systems, as well as data-aware soundness.

\newcommand{\sort}{type}

\smallskip
\noindent \textbf{Constraints.}
We start by fixing a set of data types for the variables manipulated by a process:
let $\Sigma = \{\texttt{bool}, \texttt{int}, \texttt{rat}\}$ with associated domains 
of booleans $\Dom(\bool) = \mathbb B$,
integers $\Dom(\integer) = \mathbb Z$, and
rationals $\Dom(\rational) = \mathbb Q$.
We assume a fixed set of \emph{process variables} $V$,
so there is a function $\sort\colon V \mapsto \Sigma$ assigning a type to each variable. 
For instance, in \exaref{road fines} the set of process variables is
$V = \{a, d, dj, dp, ds, p, t\}$ all of type $\integer$ (i.e., $\sort(a)=\integer$, etc).
For a type $\sigma\inn \Sigma$, 
$V_{\sigma}$ denotes the subset of variables of type $\sigma$. 
To manipulate variables, we consider expressions $\constraint$ with the following grammar:
\begin{align*}
 c &:= x_{\bool} \mid b \mid n_1~op~n_2 \mid r_1~op~r_2 
  \mid c_1 \wedge c_2  \\
  op := {\neq} \mid {=} \mid {\geq} \mid {>} \quad
 n &:= x_{\integer} \mid k \mid k_1 \cdot n_1 + k_2 \cdot n_2 \quad
 r = x_{\rational} \mid q \mid q_1 \cdot r_1 + q_2 \cdot r_2
\end{align*}
where: $x_{\bool} \in V_{\bool}, x_{\integer} \in V_{\integer}$, and $x_{\rational} \in V_{\rational}$ respectively denote a boolean, integer, and rational variable, while $b \in \mathbb B$, $k \in \mathbb Z$, and $q \in \mathbb Q$ respectively denote a boolean, integer, and rational constant. We consider booleans, integers, and rationals as three prototypical examples of three datatypes, respectively relying on a finite, infinite discrete, and infinite dense domain. Similar datatypes, such as strings equipped with equality and real numbers, can be seamlessly handled.

\noindent
These expressions will be used to capture conditions on the values of variables that are read and written during the execution of process activities.
For this reason, we call them \emph{constraints}. The set of constraints over $V$ is denoted $\mathcal C(V)$.

For our process variables $V$, we consider two disjoint sets of \emph{annotated} variables $V^r = \{v^r \mid v\inn V\}$ and $V^w = \{v^w \mid v\inn V\}$ which are read and written by process activities, respectively, as explained below, and we assume $\sort(v^r) = \sort(v^w) = \sort(v)$ for every $v\in V$. 
%
For instance, the constraint $t^r \geq a^r + e^r$ in \exaref{road fines}
dictates that the current value of variable $t$ is greater or equal than the
sum of the values of $a$ and $r$; whereas $0 \leq dj^w \wedge dj^w\leq 1440$
requires that the new value given to $dj$ (i.e., assigned to $dj$ as a result of the execution of the activity to which this constraint is attached) is 
between $0$ and $1440$. On the other hand, $a^w > a^r$ would mean that the
new value of $a$ is larger than its current value.
More generally, given a constraint $c$ as above, we refer to the annotated variables in $V^r$ and $V^w$ that appear in $\constraint$ as the \emph{read} and \emph{written variables}, respectively. 

An \emph{assignment} $\alpha$ is a total function
$\alpha \colon V \mapsto D$ mapping each variable in $V$ to a value
in its domain.
We say that $\alpha$ \emph{satisfies} a constraint $c$ over $V$, 
written $\alpha \models c$, if the evaluation of $c$ under $\alpha$ is true.
For instance, the assignment $\alpha$ such that $\alpha(t) = 10$, $\alpha(a)=7$,
and $\alpha(v)=0$ for $v\in V$ otherwise, satisfies $t^r \geq a^r + e^r$.

Our constraint language is that of linear arithmetic over integers and rationals, which is decidable, and for which a range of mature SMT (satisfiability modulo theories) solvers~\cite{Z3,DR0BT14} is available. 
Moreover, linear arithmetic is known to enjoy \emph{quantifier elimination}~\cite{Presburger29}:
if $\phi$ is a formula with atoms in $\CC(V \cup \{x\})$,
there is some $\phi'$ with free
vari\-ables $V$ that is logically equivalent to $\exists x. \phi$, 
i.e., $\phi'\,{\equiv}\,\exists x. \phi$.
We assume that $\qe$ is a quantifier elimination procedure that
returns such a formula, as implemented in off-the-shelf SMT solvers.

\noindent
We adopt the following standard definition of Data Petri Nets (DPNs) \cite{Mannhardt18,MannhardtLRA16}. 

\begin{definition}[DPN]
A 
DPN is a tuple
$\NN = \langle P, T, F, \ell, \AA, V,\guard\rangle$, where
\begin{inparaenum}[\it (1)]
\item $\langle P, T, F, \ell\rangle$ is a Petri net with non-empty, disjoint sets of places $P$ and transitions $T$, 
a flow relation $F:(P \times T)\cup(T \times P)\mapsto\mathbb{N}$ and a labelling function $\ell\colon T\mapsto \AA$, where $\AA$ is a finite set of activity labels; 
\item $V$ is a set of process variables (all with a type); and
\item $guard\colon T \mapsto \CC(V^r \cup V^w)$ is a guard mapping. 
\end{inparaenum} 
\end{definition}

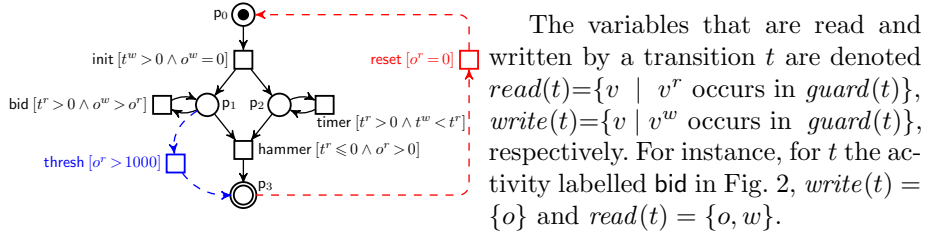
\begin{wrapfigure}[9]{l}{6.1cm}
\vspace*{-1cm}
\centering
\begin{tikzpicture}[node distance=10mm,>=stealth', xscale=1.2]
\node[place, tokens = 1] (p0) {};
\node[trans, below of=p0] (init) {};
\node[place, below of=init, xshift=-8mm] (p1) {};
\node[place, below of=init, xshift=8mm] (p2) {};
\node[trans, left of=p1, yshift=0mm] (bid) {};
\node[trans, right of=p2, yshift=0mm] (dec) {};
\node[trans, below of=p1, xshift=8mm] (exp) {};
\node[place, below of=exp,double] (end) {};
\node[trans, right of=init, xshift=40mm, red] (reset) {};
\node[trans, below of=bid, yshift=-3mm, xshift=3mm, blue] (thresh) {};
\draw[edge] (p0) -- (init);
\draw[edge] (init) -- (p1);
\draw[edge] (init) -- (p2);
\draw[edge] (p1) to[bend left] (bid);
\draw[edge] (bid) to[bend left] (p1);
\draw[edge] (p2) to[bend left] (dec);
\draw[edge] (dec) to[bend left] (p2);
\draw[edge] (p1) to (exp);
\draw[edge] (p2) to (exp);
\draw[edge] (exp) to (end);
\draw[edge, rounded corners, red, dashed] (end) -- (end -| reset) -- (reset);
\draw[edge, rounded corners, red, dashed] (reset) -- (reset |- p0) -- (p0);
\node[action,anchor=east,xshift=-2mm] at (init) {$\mathsf{init}\:[t^w\,{>}\,0 \wedge o^w\,{=}\,0]$};
\node[action,anchor=east,xshift=-2mm] at (bid) {$\mathsf{bid}\:[t^r\,{>}\,0 \wedge o^w\,{>}\,o^r]$};
\node[action,anchor=west,xshift=2mm] at (exp) {$\mathsf{hammer}\:[t^r\,{\leq}\,0 \wedge o^r\,{>}\,0]$};
\node[action,anchor=north west,xshift=-3mm,yshift=-1mm] at (dec) {$\mathsf{timer}\:[t^r\,{>}\,0 \wedge t^w\,{<}\,t^r]$};
\node[action,anchor=east,xshift=-2mm, red] at (reset) {$\mathsf{reset}\:[o^r\,{=}\,0]$};
\node[action,anchor=east,xshift=-2mm,blue] at (thresh) {$\mathsf{thresh}\:[o^r\,{>}\,1000]$};
\node[action,anchor=east,xshift=-2mm] at (p0) {$\m p_0$};
\node[action,anchor=west,xshift=2mm] at (p1) {$\m p_1$};
\node[action,anchor=east,xshift=-2mm] at (p2) {$\m p_2$};
\node[action,anchor=west,xshift=2mm,yshift=2mm] at (end) {$\m p_3$};
\draw[edge, dashed,blue] (p1) to[bend right] (thresh);
\draw[edge,dashed,blue] (thresh) to[bend right] (end);
\end{tikzpicture}
\caption{DPN for simple auction model.}
\label{fig:dpn}
\end{wrapfigure}

\begin{example}
\label{exa:auction}
Consider a simple auction process modeled by the DPN in \figref{dpn}. The initial and final markings are $M_I = \{\m p_0\}$ and $M_F = \{\m p_3\}$. It maintains the set of variables $V = \{o,t\}$, where $o$ (domain $\mathbb Q$) holds the last offer issued by a bidder, and $t$ (domain $\mathbb Z$) is a timer. The initial assignment is $\alphainit(o) = \alphainit(t) = 0$.
We briefly explain the working of the process: 
the action \textsf{init} initializes the timer $t$ to a positive value (e.g., of days) 
and the offer $o$ to $0$;
as long as the timer has not expired, it can be decreased (action \textsf{timer}), or
bids can be issued, increasing the current offer (\textsf{bid});
the item can be sold if the timer expired and the offer is positive (\textsf{hammer}).
We denote this DPN, consisting of all actions drawn in black in \figref{dpn}, by $\NN$.
For illustration purposes, we will also consider two variants of this DPN:
$\NN_{\m{reset}}$ extends $\NN$ by a \textsf{reset} action that restarts the process if the offer in the final state is 0 (drawn in red), and
$\NN_{\m{thresh}}$ adds to $\NN$ the
transition \textsf{thresh} which leads to the final state if the offer exceeds a threshold (drawn in blue). 
\end{example}

The variables that are read and written by a transition $t$ are denoted $read(t){=}\{v \mid v^r \text{ occurs in }\guard(t) \}$,
$\vwrite(t) {=} \{v \mid v^w \text{ occurs in }$ $\guard(t) \}$,
respectively.
For instance, for $t$ the activity labelled $\mathsf{bid}$ in \figref{dpn},
$\vwrite(t) = \{o\}$ and $\vread(t) = \{o,w\}$.

We call a \emph{state variable assignment}, denoted $\alpha$,
an assignment with domain $V$. In contrast, a \emph{transition variable assignment}, denoted $\beta$, is a (partial)
function that assigns values to the annotated variables $V^r\cup V^w$,
used to specify how variables change as the result of activity executions (cf. Def.~\ref{def:tfiring}).
Again, we require that $\beta(x)\in \Dom(\sort(x))$, for $x\in V^r\cup V^w$.

For a DPN $\NN$ with underlying Petri net $(P, T, F, \ell)$, a \emph{marking} $M:P\mapsto\mathbb{N}$ assigns every place a number of tokens.
A \emph{state} of $\NN$ is a pair $(M,\alpha)$ of a marking and a state variable assignment, which thus accounts for both the control flow progress and the current values of variables in $V$. 
For instance,
$(\{\m p_0\}, \domino{t}{0}{o}{0})$ is a state for the net of \exaref{auction}.
We next define when transitions may fire in a DPN. 

\begin{definition}[Transition firing]
\label{def:tfiring}
A transition $t\in T$ is \emph{enabled} in a state $(M, \alpha)$ if a transition variable assignment $\beta$ exists such that:
\begin{compactenum}[\it (i)]
\item  $\beta(v^r) = \alpha(v)$ for every $v\in \vread(t)$, i.e., 
$\beta$ assigns read variables as by $\alpha$,
\item  $\beta \models \guard(t)$, i.e., $\beta$ satisfies the guard; and 
\item  $M(p) \geq F(p,t)$ for every $p$ so that $F(p,t)\geq 0$. 
\end{compactenum}
An enabled transition may \emph{fire}, producing a new state $(M',\alpha')$, s.t. $M'(p) = M(p) - F(p,t) + F(t,p)$ for every $p\in P$, and $\alpha'(v) \eqn \beta(v^w)$ for every $v\in \vwrite(t)$, and $\alpha'(v) \eqn \alpha(v)$ for every $v\not\in \vwrite(t)$. 
A pair $(t,\beta)$ as above is called (valid) \emph{transition firing}, and we denote its firing by $\smash{(M, \alpha) \goto{(t, \beta)} (M', \alpha')}$.
\end{definition}

\noindent
Given $\NN$, we fix one state $(M_I,\alpha_0)$ as \emph{initial}, where $M_I$ is the initial marking of the underlying Petri net $(P, T, F, \ell)$ and $\alpha_0$ is a state variable assignment that specifies the initial value of all variables in $V$.  Similarly, we denote the final marking as $M_F$, and call \emph{final} any state of the form $(M_F,\alpha_F)$ for some $\alpha_F$.
For instance, the net in \exaref{auction} admits a transition firing 
$\smash{(\{\m p_0\}, \domino{t}{0}{o}{0}) \goto{\m{init}}
(\{\m p_1, \m p_2\}, \domino{t}{1}{o}{0})}$ from its initial state, while
$(\{\m p_3\}, \domino{t}{0}{o}{5})$ is one final state.

We say that $(M',\alpha')$ is \emph{reachable} in a DPN iff there exists a sequence of transition firings
$(M_I,\alpha_0)\goto{(t_1, \beta_1)} \ldots\goto{(t_n, \beta_n)}(M',\alpha')$,  denoted also as $(M_I, \alpha_0) \to^* (M',\alpha')$.
Such a sequence is a (valid) \emph{process run} if the resulting state $(M',\alpha')$ is final. 
For instance, a possible sequence of transition firings in \exaref{auction} (in which the timer $t$ is initialized to $1$ day, then decremented) is: 
\begin{equation}
\label{eq:auctiondeadlock}
(\{\m p_0\}, \domino{t}{0}{o}{0}) \goto{\m{init}}
(\{\m p_1, \m p_2\}, \domino{t}{1}{o}{0}) \goto{\m{timer}}
(\{\m p_1, \m p_2\}, \domino{t}{0}{o}{0}) 
\end{equation}
%
For simplicity of presentation, in the remainder of this paper, we restrict to \emph{bounded} DPNs, that is, DPNs where the number of tokens in reachable markings is bounded by some $m\inn\mathbb N$. 
Indeed, note that detecting unboundedness (which, in turn, witnesses unsoundness) can be done analogously to \cite{deLeoniFM21}. There,  it is shown that the standard unboundedness detection techniques based on coverability graphs seamlessly apply to the data-aware setting.
%
For instance, the DPNs $\NN$, $\NN_{\m{reset}}$, and $\NN_{\m{thresh}}$ in \exaref{auction} are 1-bounded.
Next, we define the crucial property of data-aware soundness.

\begin{definition}[Data-aware soundness]
\label{def:soundness}
A DPN is \emph{data-aware sound} iff:
\begin{compactitem}
\item[\textup{\textbf{(P1)}}] if $(M_I, \alpha_0) \to^* (M, \alpha)$ there is some $\alpha'$ such that $(M, \alpha) \to^* (M_F, \alpha')$ for all $M$, $\alpha$, i.e., any sequence can be continued to a process run;
\item[\textup{\textbf{(P2)}}] if $(M_I, \alpha_0) \to^* (M, \alpha)$ and $M\,{\geq}\,M_F$ then $M \eqn M_F$ for all $M$, $\alpha$, i.e., termination is clean; and
\item[\textup{\textbf{(P3)}}] for all $t \in T$ there 
is a sequence $(M_I, \alpha_0) \to^* (M, \alpha) \goto{(t, \beta)} (M', \alpha')$
for some $M$, $M'$, $\alpha$, $\alpha'$, and $\beta$, i.e., there are no dead transitions.
\end{compactitem}
\end{definition}

\noindent
For instance, the DPN $\NN$ from \exaref{auction} violates (P1) because after the sequence \eqref{auctiondeadlock} above
no further transition is applicable, but the reached state is not final.
$\NN_{\m{reset}}$ also violates (P3) because the transition \textsf{reset} is dead:
if a token reaches the place $\m p_3$, $o$ will never have value $0$.
On the other hand,
$\NN_{\m{thresh}}$ violates also (P2) as the following steps lead to marking $\{\m p_2, \m p_3\} > \{\m p_3\} = M_F$:
\begin{equation*}
\label{eq:exdirtytermination}
(\{\m p_0\}, \domino{t}{0}{o}{0}) \goto{\m{init}}
(\{\m p_1, \m p_2\}, \domino{t}{1}{o}{0}) \goto{\m{bid}}
(\{\m p_1, \m p_2\}, \domino{t}{1}{o}{1000})  \goto{\m{thresh}}
(\{\m p_2, \m p_3\}, \domino{t}{1}{o}{1000}) 
\end{equation*}

\newcommand{\hasDirtyTermination}{\textsc{badTermination}\xspace}
\newcommand{\hasDeadTransition}{\textsc{deadTransition}\xspace}
\newcommand{\hasBlockedState}{\textsc{blockedState}\xspace}
\newcommand{\computeConstraintGraph}{\textsc{computeCG}\xspace}
\newcommand{\DPNtoDDS}{\textsc{DPNtoDDS}\xspace}

\section{Soundness Checking: The High-Level Perspective}
\label{sec:overview}

\algref{soundness} gives a bird's-eye view of our soundness checking procedure.
The initial step is to transform the given DPN $\NN$ into a special kind of transition system (called DDS) $\BB$, by unfolding the interleaving semantics. The respective procedure \DPNtoDDS is detailed in \secref{dds}.
Next, in line 3, the procedure \computeConstraintGraph constructs the constraint graph of $\BB$ as a symbolic representation of all reachable states, as explained in \secref{cg}. In lines 4, 6, and 8 the routines \hasDirtyTermination, \hasDeadTransition, and \hasBlockedState then use the constraint graph $\CG$ to check whether $\NN$ violates the properties (P2), (P3), and (P1) of \defref{soundness}, respectively (see \secref{soundness}). If one of these properties does not hold, the procedure returns $\mathit{false}$ immediately, otherwise data-aware soundness is confirmed by returning $\mathit{true}$ in line 10. The reason why we check (P1) last is that the other two checks are significantly cheaper.
\begin{algorithm}
\caption{Procedure to check data-aware soundness of a DPN}\label{alg:soundness}
\begin{algorithmic}[1]
\Procedure{checkSound}{$\NN$}
  \State $\BB\gets \DPNtoDDS(\NN)$ 
  \State $\CG\gets \computeConstraintGraph(\BB)$
  \State\textbf{if} $\hasDirtyTermination(\CG, \NN)$ \textbf{then} \textbf{return} $\mathit{false}$
  \Comment{see \algref{properties}}
  \State\textbf{if} $\hasDeadTransition(\CG, \NN)$ \textbf{then} \textbf{return} $\mathit{false}$
  \Comment{see \algref{properties}}
  \State\textbf{if} $\hasBlockedState(\CG, \NN)$ \textbf{then} \textbf{return} $\mathit{false}$
  \Comment{see \algref{properties}}
  \State \textbf{return} $\mathit{true}$
\EndProcedure
\end{algorithmic}
\end{algorithm}

\section{From DPNs to Transition Systems}
\label{sec:dds}

This section details the first step in our soundness checking procedure: to unfold the interleaving semantics of the given DPN into a labelled transition system called \emph{data-aware dynamic system} (DDS)~\cite{LFM20}.
We start by defining DDSs.

\begin{definition}
\label{def:DDS}
A \emph{DDS} $\BB=\langle B, \binit, \AA, \Delta, B_F, V, \alphainit, \guard\rangle$ is
a labelled transition system such that
\begin{inparaenum}[(i)]
\item $B$ is a finite set of \emph{states}, with $\binit\inn B$ the initial one;
\item $\AA$ is a set of \emph{actions};
\item $\Delta \subseteq B \times \AA \times B$ is a \emph{transition relation};
\item $B_F \subseteq B$ are \emph{final states};
\item $V$ is the set of \emph{process variables};
\item $\alphainit$ is the \emph{initial assignment};
\item $\guard\colon \AA \mapsto \CC(V^r \cup V^w)$ fixes \emph{executability constraints} 
on actions. 
\end{inparaenum}
\end{definition}

\noindent
\figref{ddss} shows three example DDSs (that are in fact obtained from transforming the DPNs in \exaref{auction}, as defined below). 
The action guards are the same as in \figref{dpn}, but have been omitted for readability.
We denote a transition from state $b$ to $b'$ by executing an action $a\inn \AA$ as 
$b \goto{a} b'$.
For instance, the DDS $\BB$ in \figref{ddss} admits a transition 
$\m p_0 \goto{\m{init}} \m p_{12}$.
A \emph{configuration} of $\BB$ is a pair $(b, \alpha)$ where $b\inn B$
and $\alpha$ is an assignment.
For instance, $(\m p_0, \domino{t}{0}{o}{0})$ is the initial configuration of $\BB$ in \figref{ddss}.
An \emph{action firing} is a pair $(a, \beta)$ of an action $a \inn \AA$ and a
transition variable assignment $\beta$, i.e., a function $\beta\colon V^r \cup V^w \mapsto D$. As defined next, an action firing $(a, \beta)$ transforms a configuration $(b, \alpha)$ into a new configuration $(b', \alpha')$ by changing state as defined by  action $a$, and updating the assignment $\alpha$ to $\alpha'$, in agreement with the action guard. In the new assignment $\alpha'$, variables that are not written keep their previous value as per $\alpha$, whereas written variables are updated according to $\beta$.
Let $write(a) = \{x \mid x^w\in V^w\text{ occurs in }\guard(a)\}$.

\begin{definition}
A DDS $\BB\,{=}\,\langle B, \binit, \AA, \Delta, B_F, V, \alphainit, \guard\rangle$
\emph{admits a step} from configuration $(b, \alpha)$ to 
$(b', \alpha')$ via action firing $(a, \beta)$,
denoted $(b, \alpha) \goto{a, \beta} (b', \alpha')$,
if $b \goto{a} b'$ and
\begin{inparaenum}[(i)]
\item $\beta(v^r) = \alpha(v)$ for all $v \in V$;
\item the new state variable assignment $\alpha'$ satisfies $\alpha'(v) = \alpha(v)$
if $v \in V \setminus write(a)$, and
$\alpha'(v) = \beta(v^w)$ otherwise; 
\item $\beta \models \guard(a)$, i.e., the guard is satisfied by $\beta$.
\end{inparaenum}
\end{definition}

\noindent
Thus, the variable update works exactly as for the case of DPNs.
For instance, $\BB$ in \figref{ddss} admits a step 
$(\m p_0, \domino{t}{0}{o}{0}) \goto{\m{init},\beta}
(\m p_{12}, \domino{t}{1}{o}{0})$ where $\beta(t^r) = \beta(o^r) = \beta(o^w) = 0$ and
$\beta(o^w) = 1$.
Given a DDS $\BB$, a \emph{derivation} $\rho$ of length $n$ from a configuration $(b, \alpha)$ is a sequence of steps:
\[\smash{\rho\colon(b, \alpha) = (b_0, \alpha_0) 
\goto{a_1, \beta_1} (b_1, \alpha_1)
\goto{a_2, \beta_2} \dots
\goto{a_n, \beta_n} (b_n, \alpha_n)}\]
We also associate with $\rho$ the \emph{symbolic derivation} $\sigma$ that \emph{abstracts} $\rho$, i.e., the sequence
$\smash{\sigma\colon b_0 \goto{a_1} b_1 \goto{a_2} \dots \goto{a_n} b_n}$
where only the state and action sequences are recorded, but no concrete assignments
are given. For some $m<n$, 
$\sigma|_{m}$ is the prefix of $\sigma$ that has $m$ steps.
We call a \emph{run} of $\BB$ a derivation starting from $(\binit, \alphainit)$, and a \emph{symbolic run} a symbolic derivation starting from $\binit$.
For instance,
\begin{equation}
\label{eq:deadlockrun}
\rho\colon(\m p_0, \domino{t}{0}{o}{0}) \goto{\m{init}}
(\m p_{12}, \domino{t}{1}{o}{0}) \goto{\m{timer}}
(\m p_{12}, \domino{t}{0}{o}{0}) 
\end{equation}
is a derivation of the DDS $\BB$ from \figref{ddss}, and also a run because it starts in the initial state $\m p_0$; $\rho$ is abstracted by the symbolic run
$\m p_0 \goto{\m{init}} \m p_{12} \goto{\m{timer}} \m p_{12}$. One may notice the similarity with the sequence of transition firings \eqref{auctiondeadlock} in \secref{background}.
\smallskip

\noindent
\textbf{Transformation.}
It is straightforward to define the procedure $\DPNtoDDS(\NN)$ used in \algref{soundness} to transform a given, bounded DPN $\NN$ into a DDS.
To this end, we consider in the rest of this section a $k$-bounded DPN $\NN = \langle P, T, F, \ell, \AA, V, \guard\rangle$ with initial variable assignment $\alphainit$, initial marking $M_I$, and final marking $M_F$. 
We define $\DPNtoDDS(\NN)$ as the DDS 
$\BB = \langle B, M_I, \AA, \Delta, \{M_F\}, V, \alphainit, \guard\rangle$ where $B$
is the set of all $k$-bounded markings of $\NN$; and $(M, a, M') \in \Delta$ iff
there is some $t\inn T$ such that $\ell(t) = a$,
$M(p) \geq F(p,t)$ for every 
$p$ so that $F(p,t)\geq 0$ 
and
$M'(p) = M(p) - F(p,t) + F(t,p)$ for every $p\in P$.
Indeed, \figref{ddss} shows the DDSs obtained for the DPNs $\NN$, $\NN_{\m{reset}}$, 
and $\NN_{\m{thresh}}$ from \exaref{auction}.

After having defined the transformation from DPNs to DDSs, it remains to relate
data-aware soundness of a DPN with properties of its DDS representation. 
To that end, we define some notions that turn out to be useful:
%
A DDS $\BB\,{=}\,\langle B, \binit, \AA, \Delta, B_F, V, \alphainit, \guard\rangle$
has a \emph{blocked state} if there is a run
$\smash{\rho\colon(\binit, \alphainit) \to^* (b, \alpha)}$
to some configuration $(b, \alpha)$ such that there is no derivation
$(b, \alpha) \to^* (b_f, \alpha')$ with $b_f\inn B_F$.
Moreover, let a state $b\in B$ be \emph{reachable} if there is a run 
$(\binit, \alphainit) \to^* (b, \alpha)$ for some $\alpha$; 
and a transition $(b, a, b')\in \Delta$ be \emph{reachable} if there is a run 
$(\binit, \alphainit) \to^* (b, \alpha) \goto{a, \beta} (b', \alpha')$ for some $\alpha$, $\alpha'$, and $\beta$.
It is then not hard to observe the following relationship between 
the properties (P1), (P2), and (P3) in \defref{soundness} and
properties of the DDS representation:

\begin{lemma}
\label{lem:dpndds}
If $\NN$ is a DPN and $\BB = \DPNtoDDS(\NN)$ has control states $B$,
\begin{compactitem}
\item $\NN$ satisfies (P1) iff $\BB$ has no blocked states,
\item $\NN$ satisfies (P2) iff all $M \inn B$ with $M \geq M_F$ are unreachable, and
\item $\NN$ satisfies (P3) iff for all transitions $t \inn T$ of $\NN$
there are some $M, M'\in B$ such that $(M, \ell(t), M') \in \Delta$ is reachable.
\end{compactitem}
\end{lemma}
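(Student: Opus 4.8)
The plan is to reduce all three equivalences to a single one-step correspondence between the DPN $\NN$ and its DDS $\BB = \DPNtoDDS(\NN)$, and then read off (P1)--(P3) by unfolding definitions. The key observation is that a DPN state $(M,\alpha)$ and a DDS configuration $(M,\alpha)$ are literally the same object: by construction the control states of $\BB$ are the $k$-bounded markings of $\NN$, and $\BB$ inherits $V$, $\guard$, and the initial assignment, so both systems range over the same configuration space (identifying the DPN's $\alpha_0$ with $\alphainit$). I would first prove the one-step claim: for all $M,\alpha$ and every $t\in T$, there is a DPN firing $(M,\alpha)\goto{(t,\beta)}(M',\alpha')$ iff there is a DDS step $(M,\alpha)\goto{\ell(t),\beta'}(M',\alpha')$, where $\beta'$ extends $\beta$ by $\beta'(v^r)=\alpha(v)$ for all $v\in V$. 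This is immediate from comparing \defref{tfiring} with the DDS step definition: the control-flow sides coincide because $(M,\ell(t),M')\in\Delta$ was defined to hold exactly when $M(p)\geq F(p,t)$ and $M'(p)=M(p)-F(p,t)+F(t,p)$; and the data sides coincide because both require $\beta\models\guard(t)$ and impose the same update rule, while the sets of written variables agree since $\guard(\ell(t))=\guard(t)$, whence $\vwrite(t)=\vwrite(\ell(t))$.

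By induction on length I would then lift this to full derivations, both from the initial configuration and from an arbitrary one, obtaining three consequences: (a) the reachable configurations of $\NN$ and $\BB$ coincide; (b) a DPN continuation to a final state $(M_F,\alpha')$ corresponds exactly to a DDS derivation reaching a final configuration, since $B_F=\{M_F\}$; and (c) a reachable DPN firing of $t$ corresponds exactly to a reachable $\Delta$-edge labelled $\ell(t)$.

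With this correspondence the three equivalences are almost mechanical, and I would prove each by contraposition. For (P1): $\NN$ violates (P1) iff some reachable $(M,\alpha)$ admits no continuation to $(M_F,\cdot)$, which by (a) and (b) is precisely the existence of a blocked state of $\BB$. For (P2): the negation of (P2) is a reachable $(M,\alpha)$ with $M\geq M_F$ and $M\neq M_F$, which by (a) is precisely the reachability of a marking strictly covering $M_F$ (the stated bullet being understood with the exception $M=M_F$, since $M_F$ itself is, and must be, reachable whenever (P1) holds). For (P3): by (c), transition $t$ admits a witnessing sequence as in (P3) iff some $\Delta$-edge labelled $\ell(t)$ is reachable in $\BB$, and quantifying over all $t\in T$ yields the claim.

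I expect essentially all the content to sit in the one-step correspondence; once it is established, (P1)--(P3) follow routinely, which is why the authors call it ``not hard to observe.'' The only genuinely delicate points are bookkeeping rather than mathematical. First, the DPN's transition assignment $\beta$ is partial whereas the DDS's is total, so the two must be matched by an extension/restriction on variables not occurring in $\guard(t)$, chosen so that guard satisfaction and the induced $\alpha'$ are preserved. Second, and more importantly for (P3), $\Delta$ records only the label $\ell(t)$ while (P3) quantifies over transitions $t$; the argument therefore relies on the transformation's standing assumption that a label pins down its transition (injective $\ell$, with $\guard(\ell(t))=\guard(t)$ and matching flow), so that a reachable $\Delta$-edge labelled $\ell(t)$ genuinely certifies that $t$ itself is non-dead rather than some distinct transition sharing the label.
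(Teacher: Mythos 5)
Your proof is correct, and it coincides with what the paper intends: the lemma is stated there without any proof (prefaced by ``it is then not hard to observe''), the intended justification being exactly your one-step correspondence between DPN firings and DDS steps, lifted inductively to runs and then read off against (P1)--(P3). The two caveats you flag are real imprecisions in the paper's statement rather than gaps in your argument, and you resolve both in the way the paper's own usage confirms is intended: the (P2) bullet must exclude $M = M_F$ (literally read, it would fail for every DPN whose final marking is reachable), which matches the paper's dirty-termination check returning \emph{false} on $\CG$ even though $\CG$ contains a node whose control state is $M_F$; and the (P3) bullet needs labels to determine transitions, an assumption already forced on the DPN-to-DDS transformation for its guard map $\guard\colon \AA \mapsto \CC(V^r \cup V^w)$ to be well defined.
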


\begin{figure}[t]
\centering
\begin{tikzpicture}[node distance=14mm,>=stealth']
\node[place, minimum width=7mm] (p0) {$p_0$};
\node[place, below of=p0, minimum width=7mm] (p12) {$p_{12}$};
\node[place, below of=p12,double, minimum width=7mm] (p3) {$p_3$};
\draw[edge] (p0) -- node[right,action,anchor=west] {$\mathsf{init}$}(p12);
\draw[edge, loop left] (p12) to 
 node[action,anchor=east,yshift=3mm, xshift=4mm, near end] {$\mathsf{bid}$} (p12);
\draw[edge, loop right] (p12) to
 node[action,anchor=west,yshift=2mm] {$\mathsf{timer}$} (p12);
\draw[edge] (p12) -- node[right, action,anchor=west] {$\mathsf{hammer}$} (p3);
\end{tikzpicture}
\qquad
\begin{tikzpicture}[node distance=14mm,>=stealth']
\node[place, minimum width=7mm] (p0) {$p_0$};
\node[place, below of=p0, minimum width=7mm] (p12) {$p_{12}$};
\node[place, below of=p12,double, minimum width=7mm] (p3) {$p_3$};
\draw[edge] (p0) -- node[right,action,anchor=west] {$\mathsf{init}$}(p12);
\draw[edge, loop left] (p12) to 
 node[action,anchor=east,yshift=3mm, xshift=4mm, near end] {$\mathsf{bid}$}(p12);
\draw[edge, loop right] (p12) to
 node[action,anchor=west,yshift=2mm] {$\mathsf{timer}$} (p12);
\draw[edge] (p12) -- node[right, action,anchor=west] {$\mathsf{hammer}$} (p3);
\draw[edge, rounded corners] (p3) -- node[action,below] {$\mathsf{reset}$}  ($(p3) + (1.6,0)$) -- ($(p0) + (1.6,0)$) -- (p0);
\end{tikzpicture}
\qquad
\begin{tikzpicture}[node distance=14mm,>=stealth']
\node[place, minimum width=7mm] (p0) {$p_0$};
\node[place, below of=p0, minimum width=7mm] (p12) {$p_{12}$};
\node[place, below of=p12,double, minimum width=7mm, xshift=-7mm] (p3) {$p_3$};
\node[place, below of=p12, minimum width=7mm, xshift=7mm] (p23) {$p_{23}$};
\draw[edge] (p0) -- node[right,action,anchor=west] {$\mathsf{init}$}(p12);
\draw[edge, loop left] (p12) to 
 node[action,anchor=east,yshift=3mm, xshift=4mm, near end] {$\mathsf{bid}$}(p12);
\draw[edge, loop right] (p12) to
 node[action,anchor=west,yshift=2mm] {$\mathsf{timer}$} (p12);
\draw[edge] (p12) -- node[left, action,anchor=east] {$\mathsf{hammer}$} (p3);
\draw[edge] (p12) -- node[right, action,anchor=west] {$\mathsf{thresh}$} (p23);
\draw[edge, loop right] (p23) to 
 node[action,yshift=3mm,near end] {$\mathsf{timer}$}(p23);
\end{tikzpicture}
\caption{DDSs $\BB$, $\BB_{\m{reset}}$, and $\BB_{\m{thresh}}$ for DPNs $\NN$, $\NN_{\m{reset}}$, and $\NN_{\m{thresh}}$.}
\label{fig:ddss}
\end{figure}

\noindent
This relationship allows us to  check data-aware soundness on the level of DDSs.
For instance, $\m{reset}$ is not reachable in $\BB_{\m{reset}}$ as $\m p_3$ is only reached via $\m{hammer}$, i.e., if $o > 0$,
so $\NN_{\m{reset}}$ does not satisfy (P3).
Also, $\BB_{\m{thresh}}$ admits the run (\ref{eq:threshrun}) below to state $p_{23}$, corresponding to marking $\{p_2,p_3\}$ in $\NN_{\m{thresh}}$,  
violating (P2). 
Finally, $\BB$, $\BB_{\m{thresh}}$ and $\BB_{\m{reset}}$ have the blocked state $(\{\m p_1, \m p_2\}, \domino{t}{0}{o}{0})$, reachable
via run \eqref{deadlockrun}. 
\begin{equation}
\label{eq:threshrun}
\rho\colon(\m p_0, \domino{t}{0}{o}{0}) \goto{\m{init}}
(\m p_{12}, \domino{t}{1}{o}{0}) \goto{\m{bid}}
(\m p_{12}, \domino{t}{1}{o}{1001}) \goto{\m{thresh}}
(\m p_{23}, \domino{t}{1}{o}{1001})
\end{equation}

\section{Constraint Graph}
\label{sec:cg}

While numerical data and arithmetic are required to faithfully model processes in many real-life information systems, they render the state space infinite.
For instance, the DDS $\BB$ in \figref{ddss} has infinitely many configurations such as 
$\smash{(\m p_{12}, \domino{t}{1}{o}{5})}$, $\smash{(\m p_{12}, \domino{t}{2}{o}{3})}$,
and $\smash{(\m p_{12}, \domino{t}{0}{o}{3})}$.
However, not all state variable assignments differ with
respect to possible next actions:
action \textsf{hammer} requires $o\,{>}\,0$ and $t \leq 0$, while \textsf{bid} and \textsf{timer} need $t\,{>}\,0$; 
but it is irrelevant whether, say, $o\,{>}\,4$.
Therefore, $\smash{(\m p_{12}, \domino{t}{1}{o}{5})}$ and $\smash{(\m p_{12}, \domino{t}{2}{o}{3})}$ 
are indeed \emph{equivalent} with respect to possible next steps, but the configurations
$\smash{(\m p_{12}, \domino{t}{2}{o}{3})}$ and 
$\smash{(\m p_{12}, \domino{t}{0}{o}{3})}$ are not.
Now, the key idea of the constraint graph is to symbolically represent
equivalent configurations using a tuple
$(b,\varphi)$ of a control state $b$ and a formula $\varphi$ over variables $V$.
For instance, for $\BB$ we will distinguish $(\m p_{12}, (o \eqn 0) \wedge (t\,{>}\,0))$ (both \textsf{bid} and \textsf{timer} apply) from $(\m p_{12}, (o \eqn 0))$ (we have no information about $t$, so only \textsf{bid} applies).

To formalize this idea, let 
$\BB\,{=}\,\langle B, \binit, \AA, \Delta, B_F, V, \alphainit, \guard\rangle$ be a given DDS.
We start with some auxiliary notions:
The \emph{transition formula} $\trans{a}$ of action $a$ is given by
$\trans{a}(\vec V^r, \vec V^w)\:{=}\:
\guard(a) \wedge \bigwedge_{v\not\in \vwrite(a)} v^{w}\,{=}\,v^{r}$. 
It simply expresses conditions on variables \emph{before and af\-ter} executing the action: $\guard(a)$ must hold, and the values of all variables that are not written are copied. 
E.g., for action $\m{bid}$ in \figref{ddss}, we have
$\vwrite(\m{bid}) = \{o\}$, and
$\trans{\m{bid}} = (t^r\,{>}\,0) \wedge (o^w\,{>}\,o^r) \wedge (t^{w}\,{=}\,t^{r})$.
Next, we use the transition formula to define an \emph{update} operation, representing how a current state, captured by a formula $\phi$, changes when executing action $a$.

\begin{definition}
\label{def:update}
For a formula $\phi$ 
and action $a$, let 
$\update(\phi, a) = \qe(\exists \vec U. \phi[\vec U/\vec V] \wedge \Delta_a[\vec U/\vec V^r, \vec V/\vec V^w])$,
where $U$ is a set of variables that has the same cardinality as $V$ and is 
disjoint from all variables in $\phi$.
\end{definition}

\noindent
Here, $\phi[\vec U/\vec V]$ is the result of replacing variables $\vec V$
in $\phi$ by $\vec U$, and similar for $\trans{a}$.
For instance, if $\vec V = (o,t)$ we can take the renamed variables $\vec U = (o',t')$;
for $\phi = (t\,{>}\,0) \wedge (o \eqn 0)$ we then get
$\update(\phi, \m{bid}) = \qe(\exists o'\,t'.\: (t'\,{>}\,0) \wedge (o' \eqn 0) \wedge (o\,{>}\,o') \wedge (t\,{=}\,t'))$, which is simplified by quantifier elimination 
to $(t\,{>}\,0) \wedge (o\,{>}\, 0)$.
The use of a quantifier in \defref{update} might look like a complication, but it allows us to remember the previous state $\phi$, even if variables are overwritten by action $a$; afterwards,
quantifier elimination can produce a logically equivalent formula without $\exists$.
Next, given assignment $\alpha$, let $C_\alpha$ be the formula $C_\alpha\doteq\bigwedge_{v\in V} v = \alpha(v)$.

\begin{definition}\label{def:cg}
A \emph{constraint graph} $\CGof{b_0,\alpha}$ for $\BB$, a state $b_0\inn B$,
and assignment $\alpha$ is a triple 
$\langle S, s_0, \gamma\rangle$ where the set of nodes $S$
consists of tuples $(b, \phi)$ for $b \inn B$
and a formula $\phi$, 
and $\gamma \subseteq S \times \AA \times S$,
inductively defined as follows:
\begin{compactitem}
\item[$(i)$] $s_0 = (b_0, C_{\alpha}) \in S$ is the initial node; and
\item[$(ii)$] if $(b,\phi) \in S$ and $b \goto{a} b'$ such that
$\update(\phi, a)$ is satisfiable, there is some $(b', \phi')\in S$
with $\phi' \equiv \update(\phi, a)$, and 
$(b,\phi) \goto{a} (b', \phi')$ is in $\gamma$.
\end{compactitem}
\end{definition}

\noindent
Intuitively, the constraint graph describes symbolically the states reachable in $\BB$.
Specifically, we write $\CG$ for the graph $\CGof{\binit, \alphainit}$ starting at the initial state and the initial assignment. This is also the graph returned by the procedure
$\computeConstraintGraph(\BB)$ used in \algref{soundness}.
For instance, the first two graphs in \figref{cgs} show $\CG$ and $\CG[\BB_{\m{thresh}}]$, respectively. 
Nodes that have the control state $\m p_3$ that is final in $\BB$ are drawn with double border; the coloring will be explained later. 

For technical reasons, our procedure often requires to consider constraint graphs that are built from an arbitrary state $b$ and that, instead of assigning variables $V$ to specific values, only impose that they have the same value of fresh placeholder variables $V_0$. 
We denote this by $\CGof{b}$. 
E.g., the rightmost graph in \figref{cgs} shows $\CGof{\m p_{12}}$, 
representing the states reachable in $\BB$ from $\m p_{12}$ where $V = \langle o,t\rangle$ is initially assigned the placeholder variables $V_0= \langle o_0,t_0\rangle$.

\begin{figure}[t]
\tikzstyle{node} = [draw,rectangle split, rectangle split parts=2,rectangle split horizontal, rectangle split draw splits=true, inner sep=3pt, scale=.65, rounded corners=2pt]
\tikzstyle{goto} = [->]
\tikzstyle{highlight} = [red!80!black, line width=.8pt]
\tikzstyle{action}=[scale=.6, black]
\tikzstyle{final}=[double]
\tikzstyle{dirty}=[fill=red!80!black!20]
\begin{tikzpicture}
\node[node] (0)  {\cgnode{$\m p_0$}{$o\,{=}\,0 \wedge t\,{=}\,0$}};
\node[node, below of=0] (1)
 {\cgnode{$\m p_{12}$}{$o\,{=}\,0 \wedge t\,{>}\,0$}};
\node[node, below of=1, xshift=10mm, highlight] (2)
 {\cgnode{$\m p_{12}$}{$o\,{=}\,0$}};
\node[node, below of=2, xshift=-10mm] (3)
 {\cgnode{$\m p_{12}$}{$o\,{>}\,0 \wedge t\,{>}\,0$}};
\node[node, below of=3] (4)
 {\cgnode{$\m p_{12}$}{$o\,{>}\,0$}};
\node[node, below of=4,final] (5)
 {\cgnode{$\m p_{3}$}{$o\,{>}\,0\wedge t\,{\geq}\,0$}};
\draw[goto] (0) to node[action, right]{$\m{init}$} (1);
\draw[goto] (1) to node[action, right, xshift=2mm]{$\m{timer}$} (2);
\draw[goto, bend right] (1) to  node[action, left]{$\m{bid}$} (3);
\draw[goto] (2) to  node[action, right]{$\m{bid}$} (3);
\draw[goto, loop right, out=7, in=-7, looseness=8] (3) to  node[action, right]{$\m{bid}$} (3);
\draw[goto, bend right] (3) to  node[action, left]{$\m{timer}$} (4);
\draw[goto, bend right] (4) to  node[action, right]{$\m{bid}$} (3);
\draw[goto] (4) to  node[action, right]{$\m{hammer}$} (5);
\end{tikzpicture}
\begin{tikzpicture}
\node[node] (0)  {\cgnode{$\m p_0$}{$o\,{=}\,0 \wedge t\,{=}\,0$}};
\node[node, below of=0] (1)
 {\cgnode{$\m p_{12}$}{$o\,{=}\,0 \wedge t\,{>}\,0$}};
\node[node, below of=1, xshift=10mm] (2)
 {\cgnode{$\m p_{12}$}{$o\,{=}\,0$}};
\node[node, below of=2, xshift=-10mm] (3)
 {\cgnode{$\m p_{12}$}{$o\,{>}\,0 \wedge t\,{>}\,0$}};
\node[node, below of=3, xshift=-13mm] (4)
 {\cgnode{$\m p_{12}$}{$o\,{>}\,0$}};
\node[node, below of=4, xshift=-3mm,final] (5)
 {\cgnode{$\m p_{3}$}{$o\,{>}\,0\wedge t\,{\geq}\,0$}};
\node[node, below of=3, xshift=13mm,dirty] (7)
 {\cgnode{$\m p_{23}$}{$o\,{>}\,1000 \wedge t\,{>}\, 0$}};
\node[node, below of=7,dirty] (6)
 {\cgnode{$\m p_{23}$}{$o\,{>}\,1000$}};
\draw[goto,highlight] (0) to node[action, right,highlight]{$\m{init}$} (1);
\draw[goto] (1) to node[action, right, xshift=2mm]{$\m{timer}$} (2);
\draw[goto, bend right,highlight] (1) to  node[action, left,highlight]{$\m{bid}$} (3);
\draw[goto] (2) to  node[action, right]{$\m{bid}$} (3);
\draw[goto, loop right, out=6, in=-6, looseness=6] (3) to  node[action, right]{$\m{bid}$} (3);
\draw[goto, bend right] (3) to  node[action, left]{$\m{timer}$} (4);
\draw[goto, bend right] ($(4.north) + (.3,0)$) to  node[action, right]{$\m{bid}$} (3.200);
\draw[goto] (4) to  node[action, left]{$\m{hammer}$} (5);
\draw[goto] (4) to  node[action, right, xshift=2mm]{$\m{thresh}$} (6.north west);
\draw[goto] (7) to  node[action, right, xshift=2mm]{$\m{dec}$} (6.north);
\draw[goto,highlight] (3) to  node[action, right, xshift=2mm,highlight]{$\m{thresh}$} (7);
\draw[goto, loop right, out=6, in=-6, looseness=6, near start] (6) to  node[action, above, xshift=1mm]{$\m{timer}$} (6);
\end{tikzpicture}
\begin{tikzpicture}[node distance=9mm]
\node[node] (0)  {\cgnode{$\m p_{12}$}{$o\,{=}\,o_0 \wedge t\,{=}\,t_0$}};
\node[node, below of=0, xshift=-10mm, yshift=-24mm] (1)
 {\cgnode{$\m p_{12}$}{$o\,{>}\,o_0 \wedge t\,{=}\,t_0 \wedge t\,{>}\,0$}};
\node[node, below of=0, yshift=-6mm, xshift=-7mm] (2)
 {\cgnode{$\m p_{12}$}{$o\,{=}\,o_0 \wedge t\,{<}\,t_0\wedge t_0\,{>}\,0$}};
\node[node, below of=0, xshift=14mm, yshift=2mm, final] (3)
 {\cgnode{$\m p_{3}$}{$o\,{=}\,o_0\,{>}\,0 \wedge t\,{=}\,t_0\,{>}\,0$}};
\node[node, below of=2, final, xshift=5mm] (4)
 {\cgnode{$\m p_{3}$}{$o\,{=}\,o_0\,{>}\,0 \wedge t\,{<}\,t_0\wedge t\,{\leq}\,0\wedge t_0\,{>}\,0$}};
\node[node, below of=1] (5)
 {\cgnode{$\m p_{12}$}{$o\,{>}\,o_0 \wedge t_0\,{>}\,0\wedge t_0\,{>}\,t$}};
\node[node, below of=5, xshift=5mm] (6)
 {\cgnode{$\m p_{12}$}{$o\,{>}\,o_0 \wedge t_0\,{>}\,t\,{>}\,0$}};
\node[node, below of=6, final, yshift=1mm] (7)
 {\cgnode{$\m p_{3}$}{$o\,{>}\,o_0 \wedge o \,{>}\,0 \wedge t_0\,{>}\,t\wedge t\,{\leq}\,0\wedge t_0\,{>}\,0$}};
\draw[goto] (0.west) .. controls (-2.4, -.9) and (-2.7, -1.8) .. node[action, left, very near start, yshift=1mm]{$\m{bid}$} (1.north west);
\draw[goto] (0.200) to node[action, left]{$\m{timer}$} (2);
\draw[goto] (0.south east) to  node[action, right, yshift=1mm]{$\m{hammer}$} (3.70);
\draw[goto] (2) to node[action, right]{$\m{hammer}$} (4);
\draw[goto, loop left, out=175, in=185, looseness=5] (1) to  node[action, below, near end]{$\m{bid}$} (1);
\draw[goto, loop right, out=5, in=-5, looseness=5] (2) to  node[action, right]{$\m{timer}$} (2);
\draw[goto] (1) to node[action, left]{$\m{timer}$} (5);
\draw[goto, loop right, out=5, in=-5, looseness=5] (6) to  node[action, right]{$\m{bid}$} (6);
\draw[goto, loop left, out=175, in=185, looseness=5] (5) to  node[action, below, near end]{$\m{timer}$} (5);
\draw[goto, bend right] (5) to node[action, left]{$\m{bid}$} (6);
\draw[goto, bend right] (6) to node[action, right]{$\m{timer}$} (5);
\draw[goto, bend right] ($(5.south west) + (.2,0)$) to node[action, left]{$\m{hammer}$} ($(7.north west) + (.8,0)$);
\end{tikzpicture}
\caption{Constraint graphs $\CG$, $\CG[\BB_{\m{thresh}}]$, and $\CGof{\m p_{12}}$.\label{fig:cgs}}
\end{figure}

We next establish properties that connect constraint graphs to derivations
of the DDS $\BB$. 
For a \emph{path}  
$\pi \colon (b_0,\phi_0) \goto{a_1} (b_1,\phi_1) \goto{a_2} \dots \goto{a_n} (b_n, \phi_n)$
in a constraint graph, we denote by $\sigma(\pi)$ the symbolic derivation 
$b_0 \goto{a_1} b_1 \goto{a_2} \dots \goto{a_n} b_n$ that has the same control state and action sequences.
We now show that every combination of a path in a constraint graph and a satisfying assignment for the formula in its final node corresponds to a run in the DDS, and vice versa. To that end, we need a fixed variable renaming $\hat \alpha\colon V \mapsto V_0$.

\begin{lemma}
\label{lem:cg}
\begin{compactenum}
\item[(1)]
$\CG$ has a path $\pi\colon (\binit, C_{\alphainit}) \to^* (b,\phi)$ where
$\phi$ is satisfiable by $\alpha$,
iff $\BB$ has a run
$\smash{(\binit, \alphainit) \to^* (b, \alpha)}$
whose abstraction is $\sigma(\pi)$. 
\item[(2)]
$\CGof{b}$ has a path $\pi\colon (b,C_{\alpharen}) \to^* (b',\phi)$ s.t.
$\phi$ is satisfiable by $\alpha$,
iff $\BB$ has derivation
$\smash{(b, \alpha_0) \to^* (b', \alpha_n)}$
abstracted by $\sigma(\pi)$ with $\alpha_0\eqn\alpha|_{V_0}$ and $\alpha_n\eqn\alpha|_V$.
\end{compactenum}
\end{lemma}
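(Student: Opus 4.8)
The plan is to reduce both statements to a single-step correspondence between the symbolic $\update$ operation and the concrete step relation of $\BB$, and then to lift this correspondence to paths by induction on their length.

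The crux is a \emph{one-step lemma}: for any formula $\phi$ over $V$ (possibly containing additional free auxiliary variables that $\update$ leaves untouched), any action $a$, any transition $b \goto{a} b'$, and any assignment $\alpha'$, we have $\alpha' \models \update(\phi,a)$ if and only if there exist an assignment $\alpha$ with $\alpha \models \phi$ and a transition variable assignment $\beta$ such that $(b,\alpha) \goto{a,\beta} (b',\alpha')$. I would prove this by unfolding the definitions. Since $\qe$ returns a logically equivalent formula, $\update(\phi,a) \equiv \exists \vec U.\, \phi[\vec U/\vec V] \wedge \Delta_a[\vec U/\vec V^r, \vec V/\vec V^w]$. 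A witness for the quantified block $\vec U$ is precisely an ``old'' assignment $\alpha$; the conjunct $\phi[\vec U/\vec V]$ forces $\alpha \models \phi$, while $\Delta_a[\vec U/\vec V^r, \vec V/\vec V^w]$ forces $\guard(a)$ to hold under the assignment $\beta$ given by $\beta(v^r)=\alpha(v)$ and $\beta(v^w)=\alpha'(v)$, and additionally forces $\alpha'(v)=\alpha(v)$ for $v \notin \vwrite(a)$. These are exactly the conditions (i)--(iii) in the definition of a step of $\BB$, so both directions follow. The one delicate point is keeping the two renamings (reads mapped to $\vec U$, writes mapped to $\vec V$) aligned with the read/write semantics.

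I would then prove part (1) by induction on the length $n$ of $\pi$ (equivalently, of $\sigma(\pi)$). The base case $n=0$ is immediate, since the only assignment satisfying $C_{\alphainit}$ is $\alphainit$ itself, matching the trivial run. For the inductive step I split a path of length $n{+}1$ into its length-$n$ prefix followed by a final edge labelled $a$ leading to $(b',\phi')$ with $\phi' \equiv \update(\phi,a)$. In the forward direction, the one-step lemma yields a predecessor assignment satisfying $\phi$, the induction hypothesis builds the corresponding run prefix, and appending the step produces the desired run; here I use that satisfiability is invariant under $\equiv$. In the backward direction, I decompose a given run, apply the induction hypothesis to its prefix and the one-step lemma to its last step, and observe that satisfiability of $\update(\phi,a)$ guarantees, via clause (ii) of \defref{cg}, that the required edge is actually present in $\CG$, so that the path exists.

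Finally, part (2) follows from the same induction, started from $C_{\alpharen}$ rather than $C_{\alphainit}$, with the placeholder variables $V_0$ treated as frozen auxiliary variables: they occur in the node formulas but in no guard, hence $\update$ never renames or quantifies them and they are carried through every step unchanged. The one-step lemma, stated with auxiliary variables, applies verbatim, and the split of a satisfying assignment into its restrictions $\alpha|_{V_0}$ (recording the initial values) and $\alpha|_{V}$ (the current values) is maintained along the whole derivation. I expect the main obstacle to be purely notational, namely keeping the substitutions inside $\update$ synchronised with the step semantics; once the one-step lemma is phrased carefully, both inductions are routine.
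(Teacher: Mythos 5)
Your proposal is correct and follows essentially the same route as the paper's proof: induction on the length of the path (resp.\ run), with the core being the correspondence between satisfaction of $\update(\phi,a)$ and a concrete step of $\BB$, witnessed exactly as in the paper by extending the satisfying assignment over $\vec U$ and reading off $\beta$ from the two substitutions. Factoring this correspondence out as an explicit one-step lemma, and spelling out for part (2) that the placeholders $V_0$ are frozen through $\update$, are merely cleaner packagings of what the paper inlines (and, for (2), dismisses as ``very similar'').
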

\begin{proof}
We sketch the proof of (1), the case for (2) is very similar.
For the direction from left to right, we apply induction on $\pi$. 
If $\pi$ is empty, it must end in node $(\binit, C_{\alphainit})$. 
As $C_{\alphainit}$ is only satisfied by $\alphainit$, and $(\binit, \alphainit)$
is a valid (empty) run, the claim holds.
For the inductive step, consider a path
$\pi\colon (\binit, C_{\alphainit}) \to^* (b,\phi) \goto{a} (b', \phi')$ of length $n+1$.
Assume $\alpha$ satisfies $\phi' \equiv \update(\phi,a)$. Since $\qe$ preserves equivalence, $\alpha$ satisfies $\exists \vec U. \psi$ where $\psi \doteq \phi[\vec U/\vec V] \wedge \Delta_a[\vec U/\vec V^r, \vec V/\vec V^w]$. Thus, there must be an extension 
$\alpha'$ of $\alpha$ with domain $V \cup U$ that satisfies $\psi$. Let $\alpha_n$ be such that $\alpha_n(\vec V) = \alpha'(\vec U)$, so $\alpha_n \models \phi$, and
 $\beta$ be the transition variable assignment such that $\beta(\vec V^r) = \alpha'(\vec U)$ and $\beta(\vec V^w) = \alpha'(\vec V)$, so $\beta \models \trans{a}$.
By the induction hypothesis, $\BB$ has a run
$\smash{(\binit, \alphainit) \to^* (b', \alpha_n)}$
whose abstraction is $\sigma(\pi)|_{n}$, and by definition of the substitutions
we can extend it with the step
$\smash{(b, \alpha_n) \goto{a,\beta} (b', \alpha)}$.

The direction from right to left is even easier, we reason by induction on
the given run. In the inductive step, we again apply the definition of $\update$.
\qed
\end{proof}

\noindent
To illustrate this result, e.g. the run \eqref{threshrun} corresponds to the path in $\CG[\BB_{\m{thresh}}]$ shown in red (see \figref{ddss}). On the other hand, the lemma reveals that $\BB$ has runs with the same action sequence for \emph{all} assignments that satisfy $(o\,{>}\,1000)\wedge (t\,{>}\,0)$.

As stated above, the construction of the constraint graph according to \defref{cg} need not terminate. However, it does in many practical examples, which is related to the following property identified in~\cite{ada}: 
A DDS $\BB$ has a \emph{finite history set} if the set of
formulas $\phi$ obtained during the construction of the constraint graph (called \emph{history constraints} in \cite{ada}) is finite up to equivalence.
Thus, if $\BB$ has a finite history set, and the procedure $\computeConstraintGraph(\BB)$ checks eagerly for equivalent nodes
while executing \defref{cg}, the construction must produce
a finite graph.
Crucially, this holds for a clearly identifiable class of systems used in the literature: it was shown that if either the constraint language in $\BB$ is restricted to variable-to-variable and variable-to-constant comparisons,
or if the control flow is such that the current state depends only on finitely many actions in the past, the DDS $\BB$ has indeed a finite history set~\cite[Thms 5.2 and 5.9]{ada}. All examples of DPNs collected from the literature (see \secref{implementation}) fall in one of these categories.

\section{Data-aware Soundness}
\label{sec:soundness}

In this section we harness the constraint graph to check data-aware soundness.
To that end, we assume a
DDS $\BB = \langle B, b_I, \AA, \Delta, \{b_F\}, V, \alphainit, \guard\rangle$
obtained by translating a DPN $\NN$,
such that $b_I = M_I$ and $b_F = M_F$ correspond to the initial and final markings of $\NN$; and we assume that $\CG$ is the constraint graph of $\BB$. 
The three requirements of \defref{soundness} are then checked by the procedures in \algref{properties}:

\begin{algorithm}[t]
\caption{Checking soundness properties for $\CG = \langle S, s_0, \gamma\rangle$ and
DPN $\NN$}\label{alg:properties}
\begin{algorithmic}
\\\vspace{-1ex}
\Procedure{\hasDirtyTermination}{$\CG, \NN$}
  \State{\textbf{return} $\exists (b,\phi)\in S$ such that $b$ corresponds to marking $M$ in $\NN$ and $M\,{\geq}\,M_F$}
\EndProcedure
\\\vspace{-2.5ex}
\Procedure{\hasDeadTransition}{$\CG, \NN$}
  \State{\textbf{return} $\exists$ transition $t$ of $\NN$ such that $\nexists (s, \ell(t), s') \in \gamma$ for some $s$, $s'$}
\EndProcedure
\\\vspace{-2.5ex}
\Procedure{\hasBlockedState}{$\CG, \NN$}
  \State{\textbf{return} $\exists (b,\phi)\in S$ such that $b\neq b_F$ and $\mathit{blocked}(b,\phi)$ satisfiable}
\EndProcedure
\end{algorithmic}
\end{algorithm}

$\bullet$ \hasDirtyTermination returns \emph{true} if in the node set
$S$ of the constraint graph $\CG$ there is a node $(b,\phi)$ such that $b$
corresponds to a marking $M$ of the DPN $\NN$ with $M \geq M_F$.
For instance, it returns \emph{true} for $\CG[\BB_{\m{thresh}}]$
in \figref{cgs}
since the red nodes correspond to marking $\{\m p_2, \m p_3\}$; while it would return \emph{false} for $\CG[\BB]$. 

$\bullet$ \hasDeadTransition returns \emph{true} if there is a transition in the DPN $\NN$ whose label does not occur in $\CG$.
For instance, the constraint graph for the DDS $\BB_{\m{reset}}$ in \figref{ddss}
coincides with the graph $\CG[\BB]$ in \figref{cgs}, which does not contain \textsf{reset}.
Thus, $\hasDeadTransition(\CG,\NN_{\m{reset}})$ returns \emph{true}.

$\bullet$ For \hasBlockedState, we use the formulas $\mathit{blocked}(b,\phi)$  defined next.
For $b\inn B$ and constraint graph
$\CGof{b} 
= \langle S', \gamma', s_0'\rangle$, let 
$\mathit{final}(b) \eqn \{\phi \mid (b_F, \phi) \inn S'\}$ 
be all formulas in $\CGof{b}$ that occur together with final states.
Then,

\begin{definition}
For $\CG = (S, \gamma, s_0)$ and $(b,\phi) \in S$, let
\begin{small}
\[
\label{eq:deadlock}
\textstyle
\mathit{blocked}(b,\phi) = 
\phi[\vec V/\vec V_0] \wedge 
\neg \left (\exists \vec V. \bigvee_{\psi \in \mathit{final}(b)}\: \psi \right ).
\]
\end{small}
\end{definition}

\noindent
Now, \hasBlockedState returns \emph{true} if there is some 
node $(b,\phi)\in S$ in $\CG$ such that $\mathit{blocked}(b,\phi)$ is satisfiable.
This formula basically expresses that the process reaches control state $b$ with an assignment from which the final state is not reachable: Indeed,
$\Psi := \exists \vec V. \bigvee_{\psi \in \mathit{final}(b)} \psi$ expresses conditions
to reach a final state from $b$ and variables assigned to $\vec V_0$ (the existential quantifier reflects that we do not care
about the final values of the data variables).
Thus, $\neg \Psi$ states that \emph{no} final state can be reached, and we take the 
conjunction with $\phi$ (with variables renamed appropriately) to combine this with the assumptions of the current constraint graph node $(b,\phi)$.
For instance, 
we can check whether $\BB$ in \figref{ddss} admits a deadlock at a run the is captured by the 
node $(\m p_{12}, o\eqn 0)$ (drawn in red) of $\CG$ in \figref{cgs}, as follows:
There are three final nodes in $\CGof{\m p_{12}}$ in \figref{cgs} labelled
$\phi_1 \doteq (o \eqn o_0 \wedge o_0\,{>}\,0 \wedge t \eqn t_0 \wedge t_0\,{>}\,0)$,
$\phi_2 \doteq (o \eqn o_0 \wedge o_0\,{>}\,0 \wedge t_0\,{>}\,t \wedge t\,{\leq}\,0\wedge t_0\,{>}\,0)$, and
$\phi_3 \doteq (o\,{>}\,o_0 \wedge o\,{>}\,0 \wedge t_0\,{>}\,t \wedge t\,{\leq}\,0\wedge t_0\,{>}\,0)$, so $\mathit{final}(\m p_{12}) = \{\phi_1, \phi_2, \phi_3\}$.
We hence get
\begin{small}
\[
\mathit{blocked}(\m p_{12}, o\eqn 0) = (o_0 \eqn 0) \wedge 
\neg \left (\exists o\: t.\: (\phi_1 \vee \phi_2 \vee \phi_3 )\right )
\]
\end{small}
%
which is simplified using quantifier elimination to
$(o_0 \eqn 0) \wedge (t_0\,{\leq}\,0)$, and e.g.
satisfiable by $\alpha(o_0) \eqn \alpha(t_0) \eqn 0$.
Thus $\hasBlockedState(\CG, \NN)$ returns \emph{true},
reflecting the blocked sequence \eqref{auctiondeadlock} shown at the end of \secref{background}. 

Note that all checks in \algref{properties} are effective if $\CG$ and all $\CGof{b}$ are finite.
Finally, we relate the procedures in \algref{properties} to properties of $\BB$,
which together with \lemref{dpndds} shows that data-aware soundness of DPNs is effectively checked.

\begin{theorem}
\label{thm:blocked}
Let $\CG$  be a constraint graph for a DDS $\BB$.
\begin{compactenum}[(1)]
\item $\hasBlockedState(\CG,\BB)$
returns \textit{true} iff $\BB$ has a blocked state.
\item $\hasDeadTransition(\CG,\BB)$
returns \textit{true} iff $\NN$ has a transition $t$ of $\NN$
such that $(b, \ell(t), b') \in \Delta$ is unreachable for all $b, b'\in B$, and
\item $\hasDirtyTermination(\CG,\BB)$
returns \textit{true} iff some $b\in B$ corresponding to $M$ with 
$M \geq M_F$ is reachable.
\end{compactenum}
\end{theorem}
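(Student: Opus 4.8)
The plan is to prove each of the three equivalences separately, in each case reducing the statement about the constraint graph to a statement about runs/derivations of $\BB$ via \lemref{cg}, and then invoking \lemref{dpndds} implicitly to connect back to the DPN. The central tool throughout is \lemref{cg}: it says that paths in a constraint graph with satisfiable terminal formulas correspond exactly to runs/derivations of $\BB$ with the matching abstraction. The three parts differ mainly in which half of that correspondence is needed and how the auxiliary formulas are handled.

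\smallskip
\noindent
\textbf{Parts (2) and (3).} These are the routine cases. For part (3), \hasDirtyTermination returns \emph{true} iff some node $(b,\phi)\in S$ has $b$ corresponding to $M\geq M_F$. By construction every node of $\CG$ carries a formula $\phi$ that is satisfiable (\defref{cg} only adds a successor when $\update(\phi,a)$ is satisfiable, and the root formula $C_{\alphainit}$ is satisfied by $\alphainit$). Hence by \lemref{cg}(1) the existence of such a node is equivalent to the existence of a run $(\binit,\alphainit)\to^* (b,\alpha)$, i.e.\ to $b$ being reachable; this is exactly the right-hand side. For part (2), I would argue contrapositively: an action label $\ell(t)$ occurs on some edge $(s,\ell(t),s')\in\gamma$ iff there is a path in $\CG$ using that action with a satisfiable terminal formula, which by \lemref{cg}(1) holds iff $\BB$ has a reachable transition labelled $\ell(t)$. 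So \hasDeadTransition returns \emph{true} (no such edge exists) iff every $(b,\ell(t),b')\in\Delta$ is unreachable. The only thing to check carefully is that an edge appearing in $\gamma$ really does witness a reachable transition, which again follows because node formulas along any path are satisfiable.

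\smallskip
\noindent
\textbf{Part (1)} is the main obstacle, because it involves the compound formula $\mathit{blocked}(b,\phi)$ and the auxiliary graph $\CGof{b}$, and requires reasoning about \emph{non}-reachability of a final state. I would first show that for a fixed node $(b,\phi)\in S$, the formula $\mathit{blocked}(b,\phi)$ is satisfiable iff there is an assignment $\alpha$ with $\alpha\models\phi$ from which no derivation reaches a final configuration. The key sublemma is that $\Psi := \exists \vec V.\,\bigvee_{\psi\in\mathit{final}(b)}\psi$, evaluated with the variables $\vec V_0$ holding the current values, captures exactly those assignments from which $b_F$ \emph{is} reachable: this uses \lemref{cg}(2) applied to $\CGof{b}$, since $\mathit{final}(b)$ collects precisely the terminal formulas sitting at final states in $\CGof{b}$, and the existential over $\vec V$ discards the final data values we do not constrain. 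Negating $\Psi$ and conjoining with $\phi[\vec V/\vec V_0]$ then characterizes assignments at $(b,\phi)$ from which no final state is reachable.

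\smallskip
\noindent
To finish, I would combine this with \lemref{cg}(1): a blocked state of $\BB$ is, by definition, a configuration $(b,\alpha)$ reachable from $(\binit,\alphainit)$ from which no final configuration is reachable. Reachability of $(b,\alpha)$ corresponds, via \lemref{cg}(1), to some node $(b,\phi)\in S$ with $\alpha\models\phi$; and non-reachability of a final state from $(b,\alpha)$ corresponds, via the sublemma, to $\alpha\models\neg\Psi$. Putting these together, $\BB$ has a blocked state iff some node $(b,\phi)$ makes $\mathit{blocked}(b,\phi)$ satisfiable, which is exactly the return condition of \hasBlockedState. The delicate point to get right is the quantifier bookkeeping between the two graphs $\CG$ and $\CGof{b}$ --- in particular that the placeholder variables $\vec V_0$ used as the ``entry'' values of $\CGof{b}$ are correctly identified with the current values described by $\phi$, and that the existential $\exists\vec V$ in $\Psi$ ranges over the \emph{exit} data values while the $\vec V_0$ remain the entry values. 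I expect this variable-renaming alignment, rather than any conceptual difficulty, to be where the proof needs the most care.
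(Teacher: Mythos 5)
Your proposal is correct and takes essentially the same route as the paper's proof: parts (2) and (3) follow directly from \lemref{cg}(1) together with the observation that all node formulas in $\CG$ are satisfiable, and part (1) combines \lemref{cg}(1) (reachability of the node $(b,\phi)$) with \lemref{cg}(2) applied to $\CGof{b}$ (relating $\mathit{final}(b)$ to derivations reaching $b_F$), including the same $\vec V_0$ renaming bookkeeping. The only difference is presentational: you factor part (1) into a standalone characterization of $\Psi$ and then compose, whereas the paper proves the two directions of the equivalence separately, each by contradiction.
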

\begin{proof}
For (1)
$(\Longleftarrow)$, if $\BB$ has a blocked state, there is a run
$\rho\colon (\binit, \alphainit) \to^* (b, \alpha)$
such that $b \neq b_F$
but no continuation
$(b, \alpha) \to^* (b_F, \alpha_f)$ to the final state.
Let $\sigma$ be the abstraction of $\rho$.
By \lemref{cg} (1), there is a node $(b,\phi)$
in $\CG$ reachable by a path $\pi$ with $\sigma(\pi) = \sigma$ such that $\alpha \models \phi$.
We show that the assignment $\alpha'$ such that $\alpha'(\vec V_0) = \alpha(\vec V)$ satisfies
$\mathit{blocked}(b,\phi)$, so that $\hasBlockedState(\CG,\BB)$ returns $\mathit{true}$.
For the sake of a contradiction, suppose there is some $\psi \in \mathit{final}(b)$
such that $\alpha' \models \exists \vec V.\: \psi$,
i.e., there is an extension $\alpha''$ of $\alpha'$ with domain $V_0 \cup V$ such that $\alpha''\models \psi$.
By definition of $\mathit{final}$, $(b_F, \psi)$ is a node in $\CGof{b}$.
However, according to \lemref{cg} (2), there is a
derivation $(b, \alpha''|_{V_0}) \to^*(b_F, \alpha''|_V)$. This contradicts that no final
configuration is reachable from $(b, \alpha) = (b, \alpha''|_{V_0})$.

\noindent
For (1) $(\Longrightarrow)$, suppose $\hasBlockedState(\CG,\BB)$ returns $\mathit{true}$, so there is a node
$(b,\phi)$ in $\CG$, reachable by some path $\pi$, such that the formula
$\mathit{blocked}(b,\phi)$ is satisfied by some assignment $\alpha$.
By \lemref{cg} (1), $\sigma(\pi)$
abstracts a run $\rho\colon(\binit,\alphainit) \to^* (b,\alpha)$.
For the sake of a contradiction, suppose there is a derivation 
$\rho'\colon(b,\alpha) \to^* (b_f,\alpha_f)$ such that $b_f\in B_F$. 
Let $\alpha'$ be the assignment with domain $V \cup V_0$ such that
$\alpha'(\vec V_0) = \alpha(\vec V)$ and $\alpha'(\vec V) = \alpha_f(\vec V)$.
For the abstraction $\sigma'$ of $\rho'$, by \lemref{cg} (2) the constraint graph
$\CGof{b}$ has a path $\pi'$ to a node $(b', \psi)$ such that $\sigma' = \sigma(\pi')$ 
and $\alpha' \models \psi$.
Since $\alpha'(\vec V_0) = \alpha(\vec V)$, it must hold that
$\alpha \models \exists \vec V.\: \psi$.
However, as $\psi \in \mathit{final}(b)$ this contradicts that $\alpha$ satisfies $\mathit{blocked}(b,\phi)$.
For (2) and (3), it is clear from \lemref{cg} (1) that $\CG$ contains a transition $t$, or control state $b$, iff $\BB$ has a run in which $t$ or $b$ occurs. 
\qed
\end{proof}

\section{Implementation and Experiments}
\label{sec:implementation}

We implemented our approach in the tool \tool (arithmetic DDS analyzer) in Python;
source code, benchmarks, and a web interface are available.\footnote{\url{https://soundness.adatool.dev}}
The tool takes a (bounded) DPN in {\tt .pnml} format 
as input, and checks data-aware soundness following  \algref{soundness} and \algref{properties}.
As output, it produces graphical representations of the DDS $\BB$ and the constraint graph $\CG$, and if data-aware soundness is violated, a witness is constructed.
%
For SMT checks and quantifier elimination, \tool interfaces CVC5~\cite{DR0BT14} and Z3~\cite{Z3}, which support all  datatypes mentioned in \secref{background}.

As DPNs are a relatively recent framework, an extensive set of benchmarks is
still missing. To mitigate this, we have collected all available DPN examples/use cases from the literature, and used \tool to check soundness. The results are shown in \tabref{tool}, which indicates data-aware soundness (and the violated property of \defref{soundness}),
the verification time, number of SMT checks, number of variables in the DDS $\BB$, and the sizes of $\BB$ and $\CG$ as number of nodes/transitions.
All tests were run on an Intel Core i7 ($4{\times}2.60$GHz, 19GB RAM), using CVC5 as backend. 

 \begin{table}[t]
\begin{center}
\scriptsize
\begin{tabular}{rll@{\,}|@{\,}c c@{\ }|r|r|r|r@{/}l|r@{/}l}
\multicolumn{3}{c|@{\,}}{\textbf{process}} & 
\multicolumn{2}{@{\,}c@{\,}|}{\textbf{sound}} & 
\multicolumn{1}{|c}{\textbf{time}} & 
\multicolumn{1}{|c|}{\textbf{checks}} &
$|V|$ & 
\multicolumn{2}{|c}{\textbf{$|\BB|$}} & 
\multicolumn{2}{|c}{\textbf{$|\CG|$}} \\
\hline
(1) & road fines (normative) & \cite[Fig. 7]{MannhardtLRA16}
&  no & P1 & 3.1s    & 3909    &  8 & 9&19   &   29&44   \\ \hline 
(2) & road fines (mined) &  \cite[Fig. 12.7]{Mannhardt18}
&  no & P1 & 3.1s  & 3811    &  8 & 9&19   &   59&104  \\ \hline 
(3) & road fines (mined) & \cite[Fig. 13]{LFM18}
& yes &      & 2m16s& 114,005 &    5 & 9&19   &  234&376  \\ \hline 
(4) & hospital billing & \cite[Fig. 15.3]{Mannhardt18}
& yes &      & 3m1s   & 229,467 &  4 & 17&40   &  360&703  \\ \hline 
(5) & sepsis (normative) & \cite[Fig. 13.3]{Mannhardt18} 
& yes &      & 19s   & 831   & 3  & 301&1630 &  793&4099 \\ \hline  
(6) & sepsis (mined) &  \cite[Fig. 13.6]{Mannhardt18}
& yes &      & 1m43s  & 8085   & 4 & 301&1630 & 1117&5339 \\ \hline  
(7) & digital whiteboard: register &  \cite[Fig. 14.3]{Mannhardt18}
& yes &      & 0.1s  & 16   & 2 & 7&6 & 7&6 \\ \hline  
(8) & digital whiteboard: transfer &  \cite[Fig. 14.3]{Mannhardt18}
& no & P1 & 0.1s  & 19    & 3 & 7&6 & 7&6 \\ \hline  
(9) & digital whiteboard: discharge  & \cite[Fig. 14.3]{Mannhardt18}
& yes &     & 0.1s  & 30   & 4 & 6&6 & 7&6 \\ \hline 
(10) & credit approval & \cite[Fig. 3]{LM18}
& yes &      & 1.2s  & 434   & 5  &   6&10   &   26&27   \\ \hline 
(11) & package handling & \cite[Fig. 5]{deLeoniFM21}      
& no  & P3 & 1.3s  & 242  & 5   &  16&28   &   68&67   \\ \hline 
(12) & auction & \cite[Ex. 1.1]{ada}          
& no  & P1 & 5.8s  & 1007   & 5 &   5&7    &   13&15\\
\hline
\end{tabular}
\end{center}
 \caption{Experiments with \tool on DPNs from the literature.\label{tab:tool}}
 \end{table} 
 
\noindent
We briefly comment on these benchmarks:
(1)--(3) model the handling of traffic offenses in an information system of the Italian police; in a normative model and two versions where decision rules were mined automatically from a log with 150k traces. The former two have the same unsoundness issue (see \exaref{road fines}), related
to missing guards on written variables. 
(4) models the billing process in a hospital, it was mined from a real-life log with 100k traces, discovering guards by overlapping decision mining.
(5) and (6) reflect the triage process for sepsis patients, based on a log obtained from a hospital's ERP system for 1,050 patients. (5) is a normative model,
whereas for (6), guards were discovered by decision mining.
(7)--(9) are activity patterns for patient logistics designed
based on domain knowledge and logs of a hospital information system.
(10) is a faithful though hand-made process of granting loans to clients of a bank.
(11) is a manually but realistically designed order-to-delivery process, obtained as a DPN translation of a DBPMN model (a data- and decision-aware model that builds on BPMN and DMN S-FEEL).
(12) is a manually designed model for an English auction.

We stress that the benchmarks (1), (5), (7), (10), and (12) are out of reach of
the earlier approaches~\cite{LFM18,deLeoniFM21}, as their constraint language cannot express addition and multiplication.
Moreover, while example (3) took
1.9h with the technique of~\cite{LFM18}, soundness can be detected by \tool in less than 3 minutes.

An extensive DPN benchmark set with a wide range of problem sizes is not yet available. To provide some indications on the scalability of our method, we therefore modified some of the above benchmarks, adding (a) up to 100 sequential control states, and (b) up to 10 data variables $z_1, \dots, z_k$ for every type, in the latter case obfuscating constraints of the form $e\odot e'$ to $e = z_1 \wedge z_1 = z_2 \wedge \dots \wedge z_k \odot e'$. The results are depicted in \figref{charts}, where the x-axis reports the number of added states/variables, and the y-axis the computation time.

 \begin{figure}[t]
\begin{tabular}{cc}
\begin{tikzpicture}
\begin{axis}[
xmin = 0, xmax = 95,
ymin = 0, ymax = 123,
width = 68mm,
height = 40mm,
legend cell align = {left},
legend pos = north west,
mark size=.8pt,
font=\tiny
]
\addplot[thin,red!80!black, mark=*] file[skip first] {data/road_fines_states.dat};
\addplot[thin,cyan!90!black, mark=*] file[skip first] {data/road_fines_m_states.dat};
\addplot[thin,blue!80!black, mark=*] file[skip first] {data/sepsis_states.dat};
\addplot[green!80!black, mark=*] file[skip first] {data/credit_approval_states.dat};
\addplot[thin,orange, mark=*] file[skip first] {data/package_handling_states.dat};
\end{axis}
\node[anchor = north west, draw, line width=.2pt, rectangle, scale=.55, fill=white] at (0.3, 2.2) {
\begin{tabular}{cl}
\color{red!80!black}\foo\color{black}&road fines (1) \\[-.3ex]
\color{cyan!90!black}\foo\color{black}&road fines (2) \\[-.3ex]
\color{blue!80!black}\foo\color{black}&sepsis (5) \\[-.3ex]
\color{green!80!black}\foo\color{black}&credit approval (10) \\[-.3ex]
\color{orange}\foo\color{black}&package handling (11)\\[-.3ex]
\end{tabular}
};
\end{tikzpicture}&
\begin{tikzpicture}
\begin{axis}[
xmin = 0, xmax = 10,
ymin = 0, ymax = 633,
width = 68mm,
height = 40mm,
legend cell align = {left},
legend pos = north west,
mark size=.8pt,
font=\tiny
]
\addplot[thin,cyan!90!black, mark=*] file[skip first] {data/road_fines_m_vars2.dat};
\addplot[thin,red!80!black, mark=*] file[skip first] {data/road_fines_vars2.dat};
\addplot[thin,blue!80!black, mark=*] file[skip first] {data/sepsis_vars2.dat};
\addplot[green!80!black, mark=*] file[skip first] {data/credit_approval_vars2.dat};
\addplot[thin,orange, mark=*] file[skip first] {data/package_handling_vars2.dat};
\end{axis}
\node[anchor = north west, draw, line width=.2pt, rectangle, scale=.55, fill=white] at (0.3, 2.2) {
\begin{tabular}{cl}
\color{red!80!black}\foo\color{black}&road fines (1) \\[-.3ex]
\color{cyan!90!black}\foo\color{black}&road fines (2) \\[-.3ex]
\color{blue!80!black}\foo\color{black}&sepsis (5) \\[-.3ex]
\color{green!80!black}\foo\color{black}&credit approval (10) \\[-.3ex]
\color{orange}\foo\color{black}&package handling (11)\\[-.3ex]
\end{tabular}
};
\end{tikzpicture}
\\
(a) scalability: sequential control-flow &
(b) scalability: data variables/constraints
\end{tabular}
\caption{Scalability of \tool considering control-flow (a) and data variables (b).}
\label{fig:charts}
\end{figure}
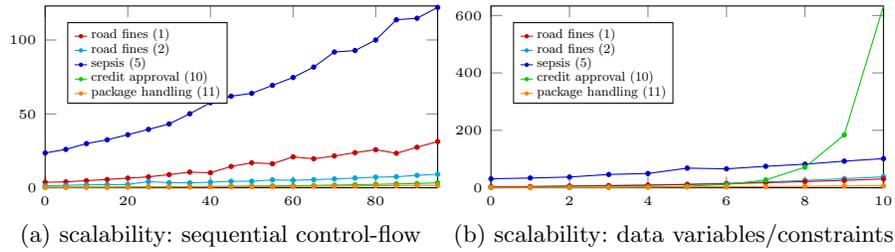

The chart in (a) suggests that the addition of sequential tasks in the control-flow increases the computation time only linearly. For (b), we also observe a linear behaviour for many systems; but for benchmarks with a more complex constraint structure such as the credit approval example, performance can be considerably harmed. However, note that the benchmarks generated in (b) exhibit far larger constraints than the real-world systems, and can hence be considered extreme cases. Finally, it it interesting to observe that similar trends are obtained for (b) when using operators other than equality in building the expanded constraints. 


\section{Conclusion}
\label{sec:conclusion}

The presence of numerical data in data-aware process models, either designed by hand or discovered from logs, render it highly intricate (undecidable in general) to manually check correctness properties such as soundness. 
We presented the first automatic technique that can verify data-aware soundness for DPNs with linear arithmetic, along with a prototype implementation.
Our experiments show that the approach is effective and efficient, and can detect soundness bugs.

In future work, we aim at realizing a tighter integration between manual and automated approaches for data-aware process discovery and correctness analysis. Specifically, we plan to study the integration of this technique with automated approaches for process discovery to either guarantee by design the soundness of the discovered processes, or to provide specific indications on how to repair them 
(e.g., by providing negative examples to be excluded or to guide the selection of fitness parameters when discovering decisions from those appearing in the log). We also intend to deepen our understanding of the scalability of the approach starting from the preliminary evaluation presented here, with the goal of isolating the main sources of computational complexity, and of incorporating specific methods to handle them.
Finally, we hope that having a solid foundational framework paired with a proof-of-concept IT artefact will trigger empirical research focussed on on-field validation of soundness for data-aware processes.

\bibliographystyle{splncs04}

\end{document}